\newtheorem{my-theorem}{Theorem}
\newtheorem{lemma}{Lemma}
\newtheorem{definition}{Definition}
\newtheorem{example}{Example}
\def\Mnatural{M$\sp{\natural}$}
\newcommand{\cvec}[1]{e_{#1}}
\renewcommand{\ALG@name}{Mechanism}
\newcommand{\red}[1]{\textcolor{red}{#1}}
\newcommand{\yokoo}[1]{\textcolor{red}{\textbf{Yokoo says:} #1}}
\newcommand{\sun}[1]{\textcolor{blue}{\textbf{Sun says:} #1}}
\newcommand{\kimura}[1]{\textcolor{green}{\textbf{Kimura says:} #1}}
\newcommand{\my}[1]{#1}
\renewcommand{\cite}[1]{\citep{#1}}
    \title{Multi-Stage Generalized Deferred Acceptance Mechanism:
      Strategyproof Mechanism for Handling General Hereditary Constraints}
\author{Kei Kimura, Kwei-guu Liu, Zhaohong Sun, \\
Kentaro Yahiro, \& Makoto Yokoo\\
Kyushu University\\
\texttt{(kkimura@/liu@agent./zhaohong.sun@/}\\ \texttt{yahiro@agent./yokoo@)}\\ \texttt{inf.kyushu-u.ac.jp}}
\begin{document}

\maketitle

\begin{abstract}
The theory of two-sided matching has been extensively developed and applied to many real-life application domains. As the theory has been applied to increasingly diverse types of environments, researchers and practitioners have encountered various forms of distributional constraints.  Arguably, the most general class of distributional constraints would be hereditary constraints; if a matching is feasible, then any matching that assigns weakly fewer students at each college is also feasible. However, under general hereditary constraints, it is shown that no strategyproof mechanism exists that simultaneously satisfies fairness and weak nonwastefulness, which is an efficiency (students' welfare) requirement weaker than nonwastefulness. We propose a new strategyproof mechanism that works for hereditary constraints called the Multi-Stage Generalized Deferred Acceptance mechanism (MS-GDA).  It uses the Generalized Deferred Acceptance mechanism (GDA) as a subroutine, which works when distributional constraints belong to a well-behaved class called hereditary \Mnatural-convex set. We show that GDA satisfies several desirable properties, most of which are also preserved in MS-GDA. We experimentally show that MS-GDA strikes a good balance between fairness and efficiency (students’ welfare) compared to existing strategyproof mechanisms when distributional constraints are close to an \Mnatural-convex set.
\end{abstract}

\section{Introduction}
The theory of two-sided matching has been developed
and 
has been applied to many real-life application domains
(see 
\citet{Roth:CUP:1990} for a comprehensive survey in
this literature).
It has attracted considerable attention from AI researchers%
~\cite{aziz2022stable,Haris19matching,hosseini2015manipulablity,IsmailiHZSY19,kawase2017near,Yahiro18,suzuki2022strategyproof}.
As the theory has been applied to increasingly
diverse types of environments, researchers and practitioners have
encountered various forms of distributional constraints
(see 
\citet{ABY:aaai22:survery} for 
a comprehensive survey on various distributional constraints). 
Two streams of works exist on matching with distributional
constraints.
One stream scrutinizes constraints that arise from real-life applications,
such as regional maximum quotas \cite{kamakoji-basic},
individual/regional minimum quotas \cite{fragiadakis::2012,goto:17},
affirmative actions \cite{ehlers::2012,kurata:jair2017}), etc.
The other stream mathematically
studies an abstract and general class of constraints, such as those
that can be represented by a substitute choice function
\cite{Hatfield:AER:2005}, hereditary and \Mnatural-convex set constraints \cite{kty:2018}, and
hereditary constraints 
\cite{aziz:cutoff:2021,goto:17,kamakoji-concepts}.
This paper deals with hereditary constraints, 
which require 
that if a matching between students and colleges 
is feasible, then any matching that places
weakly fewer students at each college is also feasible.%
\footnote{Although our paper is described
in the context of a student-college matching problem, 
the obtained result is applicable to matching problems in general.}
When some distributional constraints are imposed,
there exists a trade-off between 
fairness (which requires that no student has justified envy)
and efficiency/students' welfare. 
In particular, 
\citet{cho:2022} show that under
hereditary constraints, no strategyproof mechanism can
simultaneously satisfy fairness and an efficiency requirement called weak nonwastefulness. 

Given this impossibility result, our goal is to develop 
a strategyproof mechanism that works for any hereditary constraints, and 
strikes a good balance between fairness and efficiency/students' welfare. 
If we can ignore either fairness or efficiency completely, there exist
two simple strategyproof mechanisms.
One is the 
Serial Dictatorship (SD) mechanism \cite{goto:17}, which 
is parameterized by an exogenous serial order over the students
called a master-list. 
Students are assigned sequentially according to the master-list. 
In our context with constraints, a student is assigned to 
her most preferred college among those such that doing so 
does not cause any constraint violation.
SD achieves nonwastefulness (which is stronger than
weak-nonwastefulness) and even (stronger) Pareto efficiency,
but the mechanism completely ignores colleges' preferences, thus violating fairness.
The other popular mechanism is the Artificial Cap Deferred Acceptance
mechanism (ACDA) \cite{goto:17}, which 
artificially lowers the maximum quota of
each college such that
any matching that satisfies the reduced maximum quotas also satisfies all distributional 
constraints. Then ACDA obtains a matching by applying 
the Deferred Acceptance (DA) mechanism \cite{Gale:AMM:1962} with respect to the reduced maximum quotas. This mechanism satisfies fairness, but the obtained matching can be extremely inefficient 
since the 
artificial maximum quotas must be determined independently from
students' preferences. 
In summary, SD can be extremely unfair, and ACDA can be extremely inefficient. 
This makes 
them undesirable in real application domains.

When distributional constraints are restricted to 
a well-behaved class called hereditary \Mnatural-convex set,
there exists a strategyproof mechanism called Generalized DA (GDA), 
which is fair and much more efficient than ACDA \cite{kty:2018}. However, when distributional constraints
do not form an \Mnatural-convex set, GDA is no longer
strategyproof. 
We show that we can construct 
a strategyproof mechanism based on GDA, which we 
call Multi-Stage Generalized 
Deferred Acceptance (MS-GDA).
We show that GDA satisfies several desirable properties, most of which are preserved in MS-GDA. 
There exists a strategyproof mechanism 
called Adaptive Deferred 
Acceptance mechanism (ADA) \cite{goto:17},
which can work for any hereditary constraints. 
We experimentally show that MS-GDA is fairer than 
ADA, while 
it does not sacrifice students' welfare too much when 
distributional constraints are close to \Mnatural-convex set.
\section{Model}
\label{sec:model}
A matching market under distributional constraints is given
by $(S, C, X, \succ_S, \succ_C, {f})$.
The meaning of each element is
as follows. 
\begin{itemize}
 \item $S=\{s_1, \ldots, s_n\}$ is a finite set of students.
 \item $C=\{c_1, \ldots, c_m\}$ is a finite set of colleges. 
Let $M$ denote $\{1, 2, \ldots, m\}$.
 \item $X \subseteq S\times C$ is a finite set of contracts.
Contract $x = (s, c) \in X$ 
represents the matching between student $s$  and college $c$.
\item For any $Y \subseteq X$, 
let $Y_s:=\{(s, c) \in Y \mid c \in
C\}$ and $Y_c:=\{(s, c) \in Y \mid s \in S\}$ denote the sets of contracts in $Y$ that involve $s$ and $c$, respectively.
\item 
$\succ_S = (\succ_{s_1}, \ldots, \succ_{s_n})$ is a profile of
the students' preferences. 
For each student $s$,  $\succ_{s}$ represents the
 preference of $s$
over $X_s \cup\{(s,\emptyset)\}$, where $(s, \emptyset)$ represents an outcome such that $s$ is unmatched. We assume $\succ_s$ is strict for each $s$.
We say contract $(s,c)$ is \emph{acceptable} for $s$ if
$(s, c) \succ_s (s, \emptyset)$ holds. 
We sometimes use notations like $c \succ_s c'$ instead of $(s,c) \succ_s (s,c')$.
\item $\succ_C = (\succ_{c_1}, \ldots, \succ_{c_m})$ is a profile of
the colleges' preferences. 
For each college $c$,  $\succ_{c}$ represents the preference of $c$
over $X_c \cup\{(\emptyset,c)\}$, where $(\emptyset,c)$ represents an outcome such that $c$ is unmatched. We assume $\succ_c$ is strict for each $c$.
We say contract $(s,c)$ is \emph{acceptable} for $c$ if
$(s, c) \succ_c (\emptyset, c)$ holds.
We sometimes write $s \succ_c s'$ instead of $(s,c) \succ_c (s',c)$.
\item ${f}: {\mathbf{Z}}_+^m \rightarrow \{-\infty, 0\}$ is a function that represents distributional constraints, 
where $m$ is the number 
of colleges and ${\mathbf{Z}}_+^m$ is 
the set of vectors of $m$ non-negative integers.
For $f$, we call a family of vectors 
$F=\{\nu \in {\mathbf{Z}}_+^m \mid f(\nu)=0 \}$ 
\emph{induced vectors} of $f$. 
\end{itemize}
We assume each contract $x$ in $X_c$ is acceptable for $c$.
This is without loss of generality because if 
some contract is unacceptable for a college, we can assume
it is not included in $X$.

We say $Y\subseteq X$ is 
a \emph{matching}, 
if for each $s \in S$, 
either (i) $Y_s=\{x\}$ and $x$ is acceptable for $s$,
or (ii) $Y_s = \emptyset$ holds.

For two $m$-element vectors $\nu, \nu' \in 
\mathbf{Z}_+^{m}$,
we say $\nu \leq \nu'$ if for all $i \in M$,
$\nu_i \leq \nu'_i$ holds.
We say $\nu < \nu'$ if $\nu \leq \nu'$ and
for some $i \in M$, 
$\nu_i < \nu'_i$ holds.
Also, let $|\nu|$ 
denote the $L_1$ norm of $\nu$: 
$|\nu|= \sum_{i\in M} \nu_i$. We sometimes call
$|\nu|$ the \emph{size} of $\nu$. 

\begin{definition}[feasibility with distributional constraints]
Let $\nu$ be a vector of $m$ non-negative
integers. We say $\nu$ 
is \emph{feasible} in $f$ if ${f}(\nu) = 0$.
For $Y \subseteq X$, let us define $\nu(Y)$ as
$(|Y_{c_1}|, |Y_{c_2}|, \ldots, |Y_{c_m}|)$.
We say $Y$ is \emph{feasible} (in $f$) if $\nu(Y)$ is feasible in ${f}$.
\end{definition}
In some model, each college $c_i$ is assumed to have its 
maximum quota / capacity limit 
$q_{c_i}$. In our model, we assume such capacity constraints are 
embedded in $f$.

%
Let us first introduce a very general class of constraints called
\emph{hereditary} constraints. 
Intuitively, heredity means that 
if $Y$ is feasible in $f$, 
then any subset $Y' \subset Y$ is also feasible in $f$. 
Let $\cvec{i}$ denote an $m$-element unit vector, where its
$i$-th element is $1$ and all other elements are $0$.
Let $\cvec{0}$ denote an $m$-element zero vector
$(0, \ldots, 0)$.
\begin{definition}[heredity]
We say a family of $m$-element vectors
$F\subseteq \textbf{Z}^m_+$ is 
\emph{hereditary} if $\cvec{0} \in F$ and 
for all 
$\nu, \nu' \in \mathbf{Z}_+^{m}$, 
if $\nu > \nu'$ and $\nu \in F$, 
then $\nu' \in F$ holds. 
We say ${f}$ is \emph{hereditary}
if its induced vectors 
are hereditary. 
\end{definition}

\citet{kty:2018} show that when 
$f$ is hereditary,  and its induced vectors satisfy 
one additional condition called \emph{\Mnatural-convexity}, 
there exists a general mechanism called Generalized Deferred Acceptance
mechanism (GDA), which satisfies several desirable properties.\footnote{%
To be more precise, 
\citet{kty:2018} show that to apply their framework, it is
necessary that the family of feasible matchings forms a matroid. 
When distributional constraints are defined on $\nu(Y)$ 
rather than on contracts $Y$, 
the fact that the family of feasible contracts forms a matroid
corresponds to the fact that (i) the family of feasible vectors forms an
\Mnatural-convex set, and (ii) it is
hereditary \cite{MS:dca:1999}.}

Let us formally define an \Mnatural-convex set. 
\begin{definition}[\Mnatural-convex set]
\label{def:mnatural}
We say a family of vectors $F \subseteq \mathbf{Z}^{m}_+$ 
forms an \emph{\Mnatural-convex set}, if for all $\nu, \nu' \in F$, 
for all $i$ such that $\nu_i > \nu'_i$, 
there exists $j \in \{0\}\cup \{k \in M \mid \nu_{k} <
\nu'_{k}\}$
such that $\nu - \cvec{i} + \cvec{j} \in F$ and 
$\nu' + \cvec{i} - \cvec{j} \in F$ hold.
We say ${f}$ satisfies \Mnatural-convexity 
if its induced vectors form an \Mnatural-convex set.
\end{definition}

An \Mnatural-convex set can be considered as a
discrete counterpart of a convex set in a continuous 
domain. Intuitively, Definition~\ref{def:mnatural} means that for two feasible vectors $\nu$ and $\nu'$, there exists another feasible vector, which is one step closer starting from $\nu$ toward $\nu'$, and vice versa.
An \Mnatural-convex set has been studied
extensively in discrete convex analysis, a branch of discrete mathematics. 
Recent advances in discrete convex analysis have found many applications
in economics (see the survey paper by \citet{murota:dca:2016}). 
Note that heredity and \Mnatural-convexity are independent properties. 

\citet{kty:2018} show that various real-life distributional constraints
can be represented as a hereditary \Mnatural-convex set. 
The list of applications includes matching markets
with regional maximum quotas \cite{kamakoji-basic}, individual/regional minimum quotas \cite{fragiadakis::2012,goto:17}, 
diversity requirements in school choice \cite{ehlers::2012,kurata:jair2017}), distance constraints \cite{kty:2018},
and so on. 
However, \Mnatural-convexity can be easily violated by introducing 
some additional constraints, as discussed in Section~\ref{sec:new-mechanism}.

With a slight abuse of notation, for two sets of contracts
$Y$ and $Y'$,
we denote $Y_s \succ_s Y'_s$ if either (i)
$Y_s = \{x\}$, $Y'_s = \{x'\}$, and $x \succ_s x'$
for some $x, x' \in X_s$,
or
(ii) $Y_s = \{x\}$ for some $x \in X_s$ that is  acceptable for $s$
and $Y'_s = \emptyset$.
Furthermore, we denote $Y_s \succeq_s Y'_s$ if  either $Y_s \succ_s Y'_s$
or $Y_s = Y'_s$.
Also, we use notations like  $x \succ_s Y_s$ or $Y_s
\succ_s x$, where $x$ is a contract and $Y$ is a matching.

Let us introduce several desirable properties of a matching and a
mechanism. Intuitively, nonwastefulness means that we cannot improve the matching of one student without hurting other students. Fairness means if student $s$ is rejected from college $c$, $c$ prefers all students accepted to it over $s$. 
We say a mechanism satisfies property A if the mechanism produces a matching that satisfies property A in every possible matching market.  
\begin{definition}[nonwastefulness]
\label{def:nonwastefulness}
In matching $Y$, student $s$ \emph{claims an empty seat} of college $c$
if 
$(s,c)$ is acceptable for $s$,
$(s,c) \succ_s Y_s$,
and $(Y \setminus Y_s) \cup \{(s, c)\}$ is feasible.
We say a matching $Y$ is \emph{nonwasteful} if no student claims an empty
seat. 
\end{definition}

\begin{definition}[fairness]
\label{def:fairness}
In matching $Y$, student $s$ \emph{has justified envy} toward
another student $s'$ if 
$(s,c)$ is acceptable for $s$,
$(s,c) \succ_s Y_s$, $(s', c) \in Y$, and
$(s, c) \succ_c (s', c)$ hold.
We say matching $Y$ is \emph{fair} if no student has justified envy.
\end{definition}

When additional distributional constraints (besides colleges' maximum quotas) 
are imposed, 
fairness and nonwastefulness become incompatible in general.
We can address this incompatibility by weakening the requirement of non-wastefulness. 
\citet{kamakoji-concepts} propose a weaker version of the nonwastefulness concept, which we refer to as \emph{weak nonwastefulness}.

\begin{definition}[weak nonwastefulness]
\label{def:weak-nonwastefulness}
In matching $Y$, student $s$ \emph{strongly claims an empty seat} of $c$
if 
$(s,c)$ is acceptable for $s$,
$(s, c) \succ_s Y_s$, 
and $Y \cup \{(s, c)\}$ is feasible. 
We say a matching $Y$ is \emph{weakly nonwasteful} if no student strongly claims an empty seat. 
\end{definition}
Clearly, if student $s$ can strongly claim an empty seat of $c$, 
she can also claim an empty seat of $c$, assuming $f$ is hereditary,
since if $Y \cup \{(s,c)\}$ is feasible, 
$(Y\setminus Y_s)\cup \{(s,c)\}$ is also feasible. 
Thus, 
nonwastefulness implies weak nonwastefulness but not 
vice versa.

We can also weaken fairness. Assume there exists a common strict 
ordering among students called master-list (ML). 
We denote the fact that $s$ is placed in a higher/earlier position than 
student $s'$ in master-list $L$ as $s \succ_{L} s'$. 
WLOG, we assume master-list $L$ is given as $(s_1, s_2, \ldots, s_n)$.
\begin{definition}[ML-fairness]
\label{def:ml-fairness}
Given matching $Y$ and master-list $L$,
we say $Y$ is ML-fair, if student $s$ has justified envy toward
another student $s'$, then $s' \succ_{L} s$ holds. 
\end{definition}
ML-fairness means 
the justified envy of student $s$ toward $s'$ is not considered legitimate if $s'$ is higher than $s$ in $L$. 

\begin{definition}[strategyproofness]
\label{def:strategyproofness}
We say a mechanism is \emph{strategyproof}
if no student ever has any incentive
to misreport her preference no matter what the other students report. 
More specifically, 
let $Y$ denote the matching obtained when $s$ declare her true preference $\succ_s$, 
and $Y'$ denote the matching obtained when $s$ declare something else, 
then $Y_s \succeq_s Y'_s$ holds. 
\end{definition}

We use the following example throughout the paper.
\begin{example}
\label{ex:common-example}
There are six students $s_1, \ldots, s_6$ and six colleges $c_1, \ldots, c_6$.
There are two regions:
$r_1=\{c_1, c_2, c_3\}$, $r_2=\{c_4, c_5, c_6\}$. 
For each region $r$, the total number of students assigned to the colleges in the region must be at most $q_r=3$.
Furthermore, colleges are categorized as rural or non-rural. 
We assume $c_3$ and $c_6$ are rural colleges.
We require that the total number of students assigned to non-rural colleges must be at most $q=4$.

Intuitively, we assume a policymaker wants to achieve the balance of the number of students allocated to each region. 
Also, the policymaker wants to restrict the number of students allocated to non-rural colleges in the hope that enough 
students will be assigned to rural colleges. 
Then, $f(\nu)=0$  holds when 
(i) $\nu_1 + \nu_2 + \nu_3\leq 3$, 
(ii) $\nu_4 + \nu_5 + \nu_6\leq 3$, and 
(iii) $\nu_1 + \nu_2 + \nu_4 + \nu_5 \leq 4$ hold. 

The preferences of all students are identical:
$c_1 \succ_s c_2 \succ_s c_4 \succ_s c_5 \succ_s c_3 \succ_s c_6$.
Also, the preferences of all colleges are identical:
$s_6 \succ_c s_5 \succ_c \ldots \succ_c s_1$. 
\end{example}
If only regional quotas exist, the above distributional constraints form a hereditary \Mnatural-convex set. 
By adding the quota for non-rural colleges (i.e., (iii)), 
\Mnatural-convexity will be violated. For example, 
both $\nu=(3,0,0,1,0,2)$ and $\nu'=(2,0,1,2,0,1)$ are feasible. 
Here, $\nu_1 > \nu'_1$ holds. \Mnatural-convexity requires that we can choose
$j \in \{0, 3, 4\}$ s.t. both $\nu - \cvec{1} + \cvec{j}$ and $\nu' + \cvec{1} - \cvec{j}$ are feasible. 
However, when $j=0$, $\nu' + \cvec{1} - \cvec{j}= (3, 0, 1, 2, 0, 1)$ is infeasible. 
When $j=3$, $\nu' + \cvec{1} - \cvec{j}=(3, 0, 0, 2, 0, 1)$ is infeasible. 
When $j=4$, $\nu' + \cvec{1} - \cvec{j}= (3, 0, 1, 1, 0, 1)$ is infeasible. 

To apply ACDA in Example~\ref{ex:common-example}, 
we must define the artificially reduced maximum quota of each college s.t. all distributional constraints are satisfied. 
This is possible by setting $q_c=1$. 
Then, the obtained matching is: 
$\{ (s_1, c_6), (s_2, c_3), (s_3, c_5), (s_4, c_4), 
\linebreak 
(s_5, c_2), (s_6, c_1)\}$. 
This matching is fair but too wasteful; only one student is assigned to the most popular college $c_1$. 

The obtained matching by SD is: 
$\{(s_1, c_1), (s_2, c_1), 
(s_3, c_1), (s_4, c_4), (s_5, c_6), (s_6, c_6)\}$.
This is nonwasteful but too unfair; students $s_4, s_5$ and $s_6$ have justified envy. 
\section{Existing Mechanisms}
\label{sec:existing-mechanisms}
\subsection{Generalized Deferred Acceptance (GDA)}
\citet{Hatfield:AER:2005} 
propose a generalized version 
of the Deferred Acceptance mechanism \cite{Gale:AMM:1962}, which we call Generalized Deferred Acceptance mechanism (GDA). 
To define GDA, we first introduce \emph{choice functions} of 
students and colleges.

\begin{definition}[students' choice function]
For each student $s$, her \emph{choice function} $Ch_s$ 
specifies her most preferred contract within each $Y \subseteq X$,
 i.e.,
$Ch_s(Y)= \{x\}$, where $x$ is the most preferred acceptable contract
in $Y_s$ 
if one exists, and $Ch_s(Y)=\emptyset$  if no such contract exists. 
Then, the choice function of all students is defined as
$Ch_S(Y) := \bigcup_{s \in S} Ch_s(Y_s)$.
\end{definition}

\begin{definition}[colleges' choice function]
\label{def:college:choice}
We assume each contract $(s,c) \in X$ is associated with 
its unique strictly positive weight $w((s,c))$. 
We assume these weights respect each college's preference
$\succ_c$, i.e., if $(s, c) \succ_c (s', c)$, 
then $w((s,c)) > w((s', c))$ holds. 
For $Y\subseteq X$, let $w(Y)$ denote $\sum_{x \in Y} w(x)$.
Then, the choice function of all colleges is defined as
$Ch_C(Y) := \arg\max_{Y'\subseteq Y} 
{f}(\nu(Y')) + w(Y')$.
\end{definition}
When distributional constraints are imposed, 
colleges' preferences are not sufficient to define 
colleges' choice function. 
For example, assume there are two colleges $c_1$ and $c_2$, 
and $f(\nu) = 0$ when $\nu_1+\nu_2\leq 1$, i.e., at most one 
students can be allocated to either $c_1$ or $c_2$. 
Further assume $Y=\{(s_1, c_1), (s_2, c_2)\}$. 
Then, $Ch_C$ must select one contract from $Y$. 
Since these two contracts are related to different colleges,
colleges' preferences $\succ_C$ do not tell us which one should be 
chosen. We assume the weight of each contract gives us the required 
information to make this decision. 

Note that the argument of $Ch_C$ can be any $Y\subseteq X$;
$Y$ does not have to be a matching, i.e., 
$|Y_s|$ can be strictly more than one. 
%
As long as ${f}$ induces a hereditary 
\Mnatural-convex set, a unique subset $Y'$ exists
that maximizes the above formula. 
Furthermore, such a subset can be efficiently 
computed in the following greedy way. 
Let $Y'$ denote the set of chosen contracts, which is initially $\emptyset$. 
Then, sort $Y$ in the decreasing order of their weights. 
Then, choose contract $x$ from $Y$ one by one and add it to 
$Y'$, as long as $Y'\cup\{x\}$ is feasible. 

Note that for simplicity, 
we use the simplest version of $Ch_C$ discussed in 
\cite{kty:2018}. The results obtained in this paper are
applicable to a more general class of choice functions, e.g., 
a choice function that tries to achieve a balanced allocation 
among colleges \cite{kty:2018}.

Using $Ch_S$ and $Ch_C$, GDA is defined as Mechanism~\ref{alg:gda}.
\begin{algorithm}[t]
\begin{algorithmic}[1]
\REQUIRE $X, Ch_S, Ch_C$
\ENSURE matching $Y$
\caption{Generalized Deferred Acceptance (GDA)}
\label{alg:gda}
\STATE $Re\leftarrow \emptyset$. 
\STATE Each student $s$ offers her most preferred contract $(s,c) $ which has not been rejected before
(i.e., $(s,c) \not\in Re$). If no remaining contract is acceptable  for $s$, $s$ does not make any offer.
Let $Y$ be the set of contracts offered (i.e., $Y=Ch_S(X\setminus Re)$). 
\STATE Colleges tentatively accept $Z=Ch_C(Y)$ and reject other contracts in $Y$ (i.e., $Y\setminus Z$).
\STATE If all the contracts in $Y$ are tentatively accepted at 3, then let $Y$ be the final matching 
and terminate the mechanism. Otherwise, $Re \leftarrow Re\cup (Y\setminus Z)$, and go to $2$.
\end{algorithmic}
\end{algorithm}
Note that we describe the mechanism using terms like "student $s$ offers" to make the description 
more intuitive. In reality, GDA is a direct-revelation mechanism, where the mechanism first collects 
the preference of each student and then the mechanism chooses a contract on behalf of each student. 

As its name shows, GDA is a generalized version 
of the well-known DA. 
In a basic model, 
the only distributional constraints are $(q_c)_{c \in C}$, i.e.,  
colleges' maximum quotas. 
Thus, ${f}(Y) = 0$ iff 
for each $c\in C$, 
$|Y_c|\leq q_c$ holds. 
Then, $Ch_C(Y)$ is defined as 
$\bigcup_{c\in C} Ch_c(Y_c)$, where
$Ch_c$ is the choice function of each college $c$, 
which chooses top $q_c$ contracts from $Y_c$ 
based on $\succ_c$. 

\citet{kty:2018} show that when $f$ induces a 
hereditary \Mnatural-convex set, GDA is strategyproof,
the obtained matching $Y$ satisfies a property called Hatfield-Milgrom
stability (HM-stability), and $Y$ is the student-optimal matching within all HM-stable matchings
(i.e., all students weakly prefer $Y$ over any other HM-stable matching). 

\begin{definition}[HM-stability]
Matching $Y$ is HM-stable if $Y=Ch_S(Y)=Ch_C(Y)$, and 
there exists no contract $x \in X\setminus Y$, such that 
$x \in Ch_S(Y\cup \{x\})$ and 
$x \in Ch_C(Y\cup \{x\})$ hold.
\end{definition}
Intuitively, HM-stability means there exists no contract in $X\setminus Y$ that is mutually preferred by students and colleges.
Note that HM-stability implies fairness. 
If student $s$ has justified envy in matching $Y$, there exists
$(s,c) \in X\setminus Y$, $(s', c) \in Y$, s.t. 
$(s,c) \succ_s Y_s$ and $w((s,c))> w((s',c))$ holds. 
Then, $(s,c) \in Ch_S(Y\cup \{(s,c)\})$ and 
$(s,c) \in Ch_C(Y\cup \{(s,c)\})$ hold, i.e., 
$Y$ is not HM-stable. 
Since fairness and nonwastefulness are incompatible under distributional constraints, HM-stability does not imply nonwastefulness.

We show that GDA also satisfies several desirable properties. 
As far as we know, we are the first to show that
GDA satisfies these properties. 
Let us introduce several concepts. 
First, we introduce a stronger requirement than fairness. 
\my{
\begin{definition}[Generalized justified envy]
\label{def:generalized-envy}
For matching $Y$, we say student $s$ has generalized justified envy 
toward $s'$ if there exist $(s,c) \in X\setminus Y$, 
$(s', c') \in Y$, $(s,c) \succ_s Y_s$,
$w((s,c))> w((s', c'))$, and
$(Y\setminus \{(s', c')\}) \cup \{(s,c)\}$ is feasible.
We say matching $Y$ is fair in terms of generalized justified envy if
no student has generalized justified envy. 
\end{definition}
Clearly, 
if student $s$ has justified envy toward $s'$ at some school $c$, $s$ also has generalized justified envy toward $s'$, but not vice versa. This is because the condition in 
Definition~\ref{def:fairness} 
is a special case of the condition 
in Definition~\ref{def:generalized-envy}
assuming $c=c'$. 
Note that in the above definition, it is possible that $s=s'$ holds, 
i.e., student $s$ can have generalized justified envy toward herself when $c \neq c'$ holds.}

\my{
Next, we introduce a concept called on the Pareto frontier, 
which means that a matching cannot be improved in terms of 
generalized justified envy and efficiency
simultaneously. 
For matching $Y$, let 
$Ev(Y)$ denote $\{(s,s') \mid s, s' \in S, 
s \text{ has } \linebreak 
\text{generalized justified envy toward } s' \text{ in } Y\}$.}

\begin{definition}[Pareto frontier]
\my{
We say matching $Y$ is weakly fairer (in terms of generalized justified envy)
than another matching $Y'$
if $Ev(Y)\subseteq Ev(Y')$ holds.
Furthermore, we say matching $Y$ weakly dominates another matching $Y'$, if 
$\forall s \in S, Y_s \succeq_s Y'_s$ and 
$\exists s \in S, Y_s \succ_s Y'_s$ hold.}

\my{
We say feasible matching $Y$ is on the Pareto frontier (in terms of generalized justified envy and efficiency) 
if there exists no feasible matching $Y'$ s.t. $Y'$ weakly dominates $Y$, and $Y'$ is weakly fairer than $Y$ in terms of generalized justified envy.}
\end{definition}
The following theorems hold.

\begin{my-theorem}
GDA is weakly nonwasteful when ${f}$ induces a hereditary 
\Mnatural-convex set. 
\end{my-theorem}
\begin{proof}
Assume, 
for the sake of contradiction, 
that 
GDA is not weakly nonwasteful. 
Then, for matching $Y$ obtained by GDA, 
there exists student $s$ who strongly claims an empty seat of $c$, i.e., $(s,c)$ is acceptable for $s$, 
$(s,c) \succ_s Y_s$, 
and $Y\cup \{(s,c)\}$ is feasible. 
Then, $(s,c) \in Ch_S(Y\cup \{(s,c)\}$ holds.
Also, since 
$Y\cup \{(s,c)\}$ is feasible, 
${f}(\nu(Y\cup \{(s,c)\}))=0$. 
Then, $Ch_C(Y\cup \{(s,c)\}) = \arg\max_{Y' \subseteq Y\cup \{(s,c)\}} f(\nu(Y')) + w(Y') 
= Y\cup \{(s,c)\}$ holds,
since we assume $w((s,c))$ is strictly positive. 
Thus, 
$(s,c) \in Ch_C(Y\cup \{(s,c)\})$ holds.
This contradicts the fact that GDA obtains an HM-stable matching.
\end{proof}

\begin{my-theorem}
\label{thm:fair-in-terms-of-GJE-gda}
GDA is fair in terms of generalized justified envy. 
\end{my-theorem}
\begin{proof}
\my{
Assume, for the sake of contradiction, that
GDA obtains $Y$ and student $s$ has justified envy toward $s'$ in terms of 
generalized justified envy, i.e.,
there exist $(s,c) \in X\setminus Y$, 
$(s', c') \in Y$, $(s,c) \succ_s Y_s$,
$w((s,c))> w((s', c'))$, and
$(Y\setminus \{(s', c')\})\cup \{(s,c)\}$ is feasible.
Then, $(s,c) \in Ch_S(Y \cup \{(s,c)\})$ holds since $(s,c) \succ_s Y_s$,
Also, $(s,c) \in Ch_C(Y\cup \{(s,c)\})$ holds since
$w((s,c))> w((s', c'))$, $f(Y)=0$, and 
$f((Y\setminus \{(s', c')\}) \cup \{(s,c)\})=0$ hold.
This violates the fact that GDA obtains an HM-stable matching. }
\end{proof}

\my{
To prove Theorem~\ref{lemma:GDA-PF},
we utilize several lemmas. 
\citet{kty:2018} shows that $Ch_C$ satisfies two properties: \emph{substitutability} and \emph{the law of aggregated demand}. 
Substitutability means for any two set of contracts $Y$ and $Y'$, where $Y' \subseteq Y$, 
$Y'\setminus Ch_C(Y') \subseteq Y \setminus Ch_C(Y)$ holds. Intuitively, this condition means if contract $x\in Y'$ is not chosen by $Ch_C(Y')$, 
then it is also not chosen in $Ch_C(Y)$. 
The law of aggregated demand means any two set of contracts $Y$ and $Y'$, where $Y' \subseteq Y$, 
$|Ch_C(Y')| \leq |Ch_C(Y)|$ holds, i.e., the number of chosen contracts weakly increases if more contracts are added. 
We show that $Ch_C$ satisfies the following property, which means that the order of application of $Ch_C$ is irrelevant to the obtained result.}

\begin{lemma}
\label{prop:ord-ir}\my{
For any $Y, Y_1, Y_2 \subseteq X$ s.t. $Y=Y_1 \cup Y_2$ and $Y_1 \cap Y_2 = \emptyset$,
$Ch_C(Y)= Ch_C(Y_1 \cup Ch_C(Y_2))$ holds.}
\end{lemma}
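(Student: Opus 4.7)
The plan is to recognize this as the classical \emph{irrelevance of rejected contracts} (path-independence) property of choice functions, which is a well-known consequence of substitutability of $Ch_C$ together with the fact that $Ch_C$ is defined via a unique maximizer.

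First I would set $R_2 := Y_2 \setminus Ch_C(Y_2)$, so that $R_2$ is the set of contracts rejected when $Ch_C$ is applied to $Y_2$. Since $Y_1 \cap Y_2 = \emptyset$ and $R_2 \subseteq Y_2$, we have $R_2 \cap Y_1 = \emptyset$, and therefore $Y_1 \cup Ch_C(Y_2) = (Y_1 \cup Y_2) \setminus R_2 = Y \setminus R_2$. Thus the lemma reduces to showing $Ch_C(Y) = Ch_C(Y \setminus R_2)$, i.e., that removing contracts rejected by $Ch_C(Y_2)$ from the enlarged pool $Y$ does not change the chosen set.

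Next I would invoke substitutability (which the paper attributes to \citet{kty:2018} and quotes just before the lemma) on the pair $Y_2 \subseteq Y = Y_1 \cup Y_2$: this yields $Y_2 \setminus Ch_C(Y_2) \subseteq Y \setminus Ch_C(Y)$, that is, $R_2 \cap Ch_C(Y) = \emptyset$. Equivalently, $Ch_C(Y) \subseteq Y \setminus R_2$.

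Finally I would close the argument by appealing to uniqueness of the argmax in Definition~\ref{def:college:choice}. By construction $Ch_C(Y)$ is the unique subset of $Y$ maximizing $f(\nu(Y')) + w(Y')$ (well-defined under the hereditary \Mnatural-convexity assumption). Because $Ch_C(Y) \subseteq Y \setminus R_2 \subseteq Y$, the same set $Ch_C(Y)$ is also a feasible candidate subset of $Y \setminus R_2$, and attains the same objective value; any other candidate subset of $Y \setminus R_2$ is also a subset of $Y$, so cannot strictly beat it. Uniqueness of the maximizer over the smaller ground set $Y \setminus R_2$ then forces $Ch_C(Y \setminus R_2) = Ch_C(Y)$, which by the first paragraph is exactly $Ch_C(Y_1 \cup Ch_C(Y_2)) = Ch_C(Y)$. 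The only subtle step is the application of substitutability; the rest is essentially bookkeeping and the uniqueness of the argmax, so I do not anticipate any real obstacle.
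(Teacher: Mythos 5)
Your proof is correct, but it closes the argument by a different route than the paper. Both proofs hinge on the same single application of substitutability to the pair $Y_2 \subseteq Y$, which gives $R_2 = Y_2 \setminus Ch_C(Y_2) \subseteq Y \setminus Ch_C(Y)$ and hence $Ch_C(Y) \subseteq Y\setminus R_2 = Y_1 \cup Ch_C(Y_2)$. From there the paper stays entirely at the level of abstract choice-function axioms: it applies substitutability a second time, to $Y_1 \cup Ch_C(Y_2) \subseteq Y$, deduces by a case analysis that $Y \setminus Ch_C(Y_1 \cup Ch_C(Y_2)) \subseteq Y \setminus Ch_C(Y)$, i.e., $Ch_C(Y) \subseteq Ch_C(Y_1 \cup Ch_C(Y_2))$, and then invokes the law of aggregated demand to upgrade this inclusion to an equality. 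You instead exploit the concrete definition of $Ch_C$ as the unique maximizer of $f(\nu(\cdot)) + w(\cdot)$: since $Ch_C(Y)$ lies inside the smaller ground set $Y \setminus R_2$, and every candidate over the smaller set is also a candidate over $Y$, uniqueness of the argmax over $Y$ forces $Ch_C(Y \setminus R_2) = Ch_C(Y)$. Your argument is shorter and avoids the law of aggregated demand altogether, which is a genuine simplification; the trade-off is that it is tied to the rationalizable, unique-argmax form of $Ch_C$ (and to the standing hereditary \Mnatural-convexity assumption that makes the argmax unique), whereas the paper's version treats substitutability and the law of aggregated demand as black-box properties and therefore transfers verbatim to the more general class of colleges' choice functions the authors allude to after Definition~\ref{def:college:choice}.
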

\begin{proof}\my{
Since $Ch_C$ is substitutable, we have 
\begin{align}
&Y \setminus Ch_C(Y) \supseteq Y_2 \setminus Ch_C(Y_2)\text{, and}\label{eq:sub1}\\
&Y \setminus Ch_C(Y) \supseteq (Y_1 \cup Ch_C(Y_2)) \setminus Ch_C(Y_1 \cup Ch_C(Y_2)).\label{eq:sub2}
\end{align}
Using these inclusion relations, we will show that $Ch_C(Y) \subseteq Ch_C(Y_1 \cup Ch_C(Y_2))$.
Let $x\in Y \setminus Ch_C(Y_1 \cup Ch_C(Y_2))$.
If $x \in Y_1 \cup Ch_C(Y_2)$, then $x \in (Y_1 \cup Ch_C(Y_2)) \setminus Ch_C(Y_1 \cup Ch_C(Y_2))$ and, from \eqref{eq:sub2}, $x\in Y \setminus Ch_C(Y)$ holds.
Otherwise (i.e., if $x \notin Y_1 \cup Ch_C(Y_2)$), we have $x \in Y_2 \setminus Ch_C(Y_2)$ since $Y = Y_1 \cup Ch_C(Y_2) \cup (Y_2 \setminus Ch_C(Y_2))$.
Then, from \eqref{eq:sub1}, $x\in Y \setminus Ch_C(Y)$ holds.
In both cases, we have $x\in Y \setminus Ch_C(Y)$.
Therefore, $Y \setminus Ch_C(Y_1 \cup Ch_C(Y_2)) \subseteq Y \setminus Ch_C(Y)$, 
implying that $Ch_C(Y) \subseteq Ch_C(Y_1 \cup Ch_C(Y_2))$.}

\my{
Now, from the law of aggregate demand, $|Ch_C(Y)| \geq |Ch_C(Y_1 \cup Ch_C(Y_2))|$.
Hence, $Ch_C(Y) = Ch_C(Y_1 \cup Ch_C(Y_2))$ holds.}
\end{proof}

\my{
Next, by using the fact that the application order of $Ch_C$ is irrelevant, 
we show that the output of GDA is the set of contracts that are chosen when 
all contracts offered during the execution of GDA are given at once.}
\begin{lemma}
\label{prop:GDA-opt}\my{
Let $X'$ be the set of contracts that are offered during the execution of GDA 
and $Y^*$ be the output of GDA.
Then $Y^*= Ch_C(X')$.}
\end{lemma}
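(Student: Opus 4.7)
The plan is to prove the lemma by induction on the round index of GDA, establishing the stronger statement that at every round $t$, the tentatively accepted set $Z_t := Ch_C(Y_t)$ coincides with $Ch_C(X'_t)$, where $Y_t$ denotes the offers submitted in round $t$ and $X'_t := \bigcup_{i\leq t} Y_i$ is the cumulative set of all contracts offered through round $t$. Since GDA terminates at some round $T$ with $Y_T = Z_T = Y^*$ and $X'_T = X'$, this yields the claim.

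First, I would record the structural bookkeeping of GDA that makes the induction go through. In round $t+1$, each student $s$ whose round-$t$ offer $x$ belonged to $Z_t$ re-offers the same $x$, because $x \notin Re$ and $x$ is still her most preferred non-rejected contract; each student whose offer was rejected moves to her strictly less preferred next acceptable contract, which by the structure of the rejection bookkeeping has never been offered before. Writing $N_{t+1} := Y_{t+1} \setminus Z_t$, it follows that $Y_{t+1} = Z_t \cup N_{t+1}$ (disjointly) and that $N_{t+1} \cap X'_t = \emptyset$, so $X'_{t+1} = X'_t \cup N_{t+1}$ is also a disjoint union.

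The induction itself is then short. The base case $t=1$ holds because $X'_1 = Y_1$, so $Z_1 = Ch_C(Y_1) = Ch_C(X'_1)$. For the step, assume $Z_t = Ch_C(X'_t)$. Applying Lemma~\ref{prop:ord-ir} to the disjoint partition $X'_{t+1} = N_{t+1} \cup X'_t$ gives
\[
Ch_C(X'_{t+1}) \;=\; Ch_C\bigl(N_{t+1} \cup Ch_C(X'_t)\bigr) \;=\; Ch_C(N_{t+1} \cup Z_t) \;=\; Ch_C(Y_{t+1}) \;=\; Z_{t+1},
\]
using the induction hypothesis in the second equality and the decomposition $Y_{t+1} = Z_t \cup N_{t+1}$ in the third. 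Specializing to $t = T$ gives $Y^* = Z_T = Ch_C(X'_T) = Ch_C(X')$.

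The main obstacle is not the algebra but pinning down the disjointness claim $N_{t+1} \cap X'_t = \emptyset$ precisely from the mechanism's description: one must argue that a student never re-offers a contract that was once in $Re$, and that contracts in $Z_t$ which reappear in $Y_{t+1}$ are exactly accounted for by the $Z_t$ summand rather than by $N_{t+1}$. Once this bookkeeping is made explicit, Lemma~\ref{prop:ord-ir} does all of the substantive work.
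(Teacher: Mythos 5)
Your proof is correct and follows essentially the same route as the paper's: the same induction on rounds, the same decomposition of each round's offers into the previously accepted contracts plus fresh offers, and the same invocation of Lemma~\ref{prop:ord-ir} to absorb the cumulative history. The bookkeeping point you flag as the "main obstacle" (that rejected contracts are never re-offered, so the new offers are disjoint from the cumulative set) is exactly the observation the paper also relies on, stated there as $Y_i = Ch_C(Y_{i-1}) \cup Y'_i$.
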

\begin{proof}\my{
Let $Y_i$ be the set of contracts offered in the $i$-th iteration of Step 2 in GDA.
Clearly, $X' = Y_1 \cup \dots \cup Y_t$, where $t$ is the number of iterations.
We inductively show that $Ch_C(Y_i) = Ch_C(Y_1 \cup \dots \cup Y_i)$ for $i = 1,\dots,t$.
Once this is done, we have $Y^* = Ch_C(Y_t) = Ch_C(Y_1 \cup \dots \cup Y_t) = Ch_C(X')$.
For $i=1$, clearly, $Ch_C(Y_1) = Ch_C(Y_1)$.
For $i > 1$, assume that $Ch_C(Y_{i-1}) = Ch_C(Y_1 \cup \dots \cup Y_{i-1})$.
Let $Y'_i$ denote $Y_i \setminus (Y_1 \cup \dots \cup Y_{i-1})$, i.e., 
$Y'_i$ is the set of newly offered contracts at the $i$-th iteration.
Here, $Y_i = Ch_C(Y_{i-1}) \cup Y'_i$ holds since each student cannot offer a contract that has been rejected so far.
Then
\begin{align*}
Ch_C(Y_i) &= Ch_C(Ch_C(Y_{i-1}) \cup Y'_i)\\
&= Ch_C(Ch_C(Y_1 \cup \dots \cup Y_{i-1}) \cup Y'_i)\\
&= Ch_C(Y_1 \cup \dots \cup Y_{i-1} \cup Y'_i)\\
&= Ch_C(Y_1 \cup \dots \cup Y_{i-1} \cup Y_i),
\end{align*}
where the second equality holds by induction, the third equality holds by Lemma~\ref{prop:ord-ir}, and the fourth equality holds since $Y_i \setminus Y'_i =Ch_C(Y_{i-1}) \subseteq Y_{i-1}$.}
\end{proof}

\my{
Now, we are ready to prove Theorem~\ref{lemma:GDA-PF}.}
\begin{my-theorem}
\label{lemma:GDA-PF}
The matching obtained by GDA is on the Pareto frontier
when ${f}$ induces a hereditary \Mnatural-convex set. 
\end{my-theorem}
\begin{proof}
Let $Y^*$ be the output of GDA.
Assume, 
for the sake of contradiction, 
there exists another feasible matching $Y'$ that weakly dominates $Y^*$ and is weakly fairer than $Y^*$.
Since $Y^*$ is fair in terms of generalized justified envy by Theorem~\ref{thm:fair-in-terms-of-GJE-gda}, so is $Y'$.
Let $X'$ be the set of contracts that are offered during the execution of GDA.
Then, $Y' \subseteq X'$ holds, since $x \succeq_s Y^*_s$ for some $s\in S$ iff $x \in X'$ by the description of GDA.
Let us consider a situation where $X$ is restricted to $X'$. In this situation, GDA still obtains $Y^*$ 
since the contracts in $X \setminus X'$ do not affect the output of GDA. 
As noted in Section~\ref{sec:existing-mechanisms}, $Y^*$ is the student-optimal matching within all HM-stable matchings.
This, together with the fact that $Y'$ weakly dominates $Y^*$, implies that $Y'$ is not HM-stable, i.e., 
$Ch_S(Y') \neq Y'$, $Ch_C(Y') \neq Y'$, or 
there exists contract $x \in X' \setminus Y'$ such that 
$x \in Ch_S(Y'\cup \{x\})$ and 
$x \in Ch_C(Y'\cup \{x\})$ hold.
Since $Y'$ is a feasible matching, we have $Ch_S(Y') = Ch_C(Y') = Y'$.
Therefore, there exists contract $x =(s,c) \in X' \setminus Y'$ such that 
$x \in Ch_S(Y'\cup \{x\})$ and 
$x \in Ch_C(Y'\cup \{x\})$ hold.
Here, $x \in Ch_S(Y'\cup \{x\})$ implies that $x = (s,c) \succ_s Y'_s$.
Moreover, $x \in Ch_C(Y'\cup \{x\})$ implies that either (i) $Y'\cup \{x\}$ is feasible or (ii) $Y'\cup \{x\}$ is infeasible and there exists contract $y =(s',c') \in Y'$ such that $w(x) > w(y)$ and $(Y' \setminus \{y\}) \cup \{x\}$ is feasible.
However, case (i) does not occur.
To see this, recall that the set of feasible matchings forms a matroid.
Moreover, by Lemma~\ref{prop:GDA-opt}, 
$Y^*= Ch_C(X')$ holds, namely, $Y^* = \arg\max_{Y\subseteq X'}f(\nu(Y))+w(Y)$.
This implies that $Y^*$ is a maximal feasible matching.
Since the sizes of maximal feasible matchings are the same in any matroid, it follows that $|Y'| \le |Y^*|$, as $Y'$ is a feasible matching.
On the other hand, since $Y'$ weakly dominates $Y^*$, for each student $s \in S$ such that $Y^*_s \succ_s (s,\emptyset)$ we have $Y'_s \ge Y^*_s \succ_s (s,\emptyset)$, implying that $|Y'| \ge |Y^*|$.
Hence, $|Y'| = |Y^*|$ holds.
Thus, $Y'$ is also a maximal feasible matching, implying that $Y'\cup \{x\}$ is infeasible.
Hence, case (ii) occurs.
This implies that $s$ has a generalized justified envy towards $s'$, contradicting the fact that $Y'$ is fair in terms of generalized justified envy.
\end{proof}

To prove Theorem~\ref{theorem:weak-nonwasteful-igda} in the next
section, 
we use an alternative description of GDA 
defined as Mechanism~\ref{alg:gda-alt}.
\begin{algorithm}[t]
\begin{algorithmic}[1]
\REQUIRE $X, Ch_S, Ch_C$
\ENSURE matching $Y$
\caption{Alternative description of GDA}
\label{alg:gda-alt}
\STATE Choose $s \in S$.
\STATE Let $Z$ denote the matching and $Re$ denote the rejected contracts
obtained by Mechanism~\ref{alg:gda} for $X\setminus X_s$. $\hat{s} \leftarrow s$. 
\STATE Student $\hat{s}$ offers her most preferred contract $(\hat{s}, c_i)$ which has not been rejected before
(i.e., $(\hat{s},c_i) \not\in Re)$. $Y \leftarrow Z \cup \{(\hat{s},c_i)\}$.
\STATE If $Y$ is feasible, then let $Y$ be the final matching and 
terminate the mechanism. 
Otherwise, there exists exactly one rejected contract, i.e., 
$Ch_C(Y) = Y \setminus \{(s', c_j)\}$ holds. $Z\leftarrow Y \setminus \{(s', c_j)\}$, $Re\leftarrow Re \cup \{(s', c_j)\}$, 
and $\hat{s} \leftarrow s'$, go to 3. 
\end{algorithmic}
\end{algorithm}
\citet{kty:2018} show that 
when $Z=Ch_C(Z)$ holds, for any contract $x \in X\setminus Z$, 
$Ch_C(Z\cup\{x\})$ can be $Z$, $Z\cup\{x\}$, or 
$Z\cup\{x\}\setminus\{z\}$ for some $z\in Z$, i.e., 
either all contracts are accepted or exactly one contract
(which can be $x$ or another contract in $Z$) is rejected. 
Thus, at Line~4 in Mechanism~\ref{alg:gda-alt}, 
either all contracts in $Y$ are tentatively accepted or 
exactly one contract $(s', c_j)$ is rejected. 
Note that 
$s'$ can be $\hat{s}$ or another student. Also, 
$c_j$ can be $c_i$ or another college. When distributional constraints 
are imposed, it is possible that contract $(s', c_j)$ is 
rejected by adding $(\hat{s}, c_i)$, where $j\neq i$. 
For example, assume 
at most one student can be assigned to colleges $c_1$ and $c_2$. 
Then, when $Z=\{(s_1, c_1)\}$ and $Y=\{(s_1, c_1), (s_2, c_2)\}$, 
$Ch_C(Y)$ can be $\{(s_2, c_2)\}$, i.e., $(s_1, c_1)$ is rejected 
by adding $(s_2, c_2)$. 

To show that these two alternative descriptions are equivalent, 
we use the following lemma.
\begin{lemma}
\label{lemma:dummy}
For the original matching market, let us consider 
an extended market that includes $t$ dummy colleges 
$c_{d_1}, \ldots, c_{d_t}$. Any matching that assigns some student 
to these colleges is infeasible
(note that 
the distributional constraints of this extended market
form a hereditary \Mnatural-convex set as long as the original
distributional constraints form 
a hereditary \Mnatural-convex set). 
For student $s$, her preference can be modified such that 
a contract with a dummy college $(s, c_{d_i})$ can be inserted at any place. 

Then, the matching obtained by GDA (defined as Mechanism~\ref{alg:gda}) 
in this extended market
is exactly the same as the matching obtained by the GDA in the original 
market. 
\end{lemma}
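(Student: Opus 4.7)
The key observation driving the proof is that dummy contracts can never be chosen by the colleges' choice function $Ch_C$. For any subset $Y' \subseteq X \cup D$ (writing $D$ for the set of contracts involving a dummy college) that contains a dummy contract, the vector $\nu(Y')$ is infeasible, so $f(\nu(Y')) = -\infty$. Since the weights are strictly positive, discarding the dummies from $Y'$ yields a strictly higher value of $f + w$, so no argmax can retain a dummy. This gives the helper identity $Ch_C(Y) = Ch_C(Y \setminus D)$ for every $Y \subseteq X \cup D$. In the extended GDA this means that whenever a dummy contract is offered in some iteration, it is placed into the rejection set $Re$ in that very iteration, without affecting the tentative matching.

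My plan is to combine this observation with Lemma~\ref{prop:GDA-opt}, which states that the GDA output equals $Ch_C(X')$ for $X'$ the set of all contracts offered during execution. Let $X'_E$ and $X'_O$ denote the offered sets in the extended and original runs, respectively. Then
\begin{equation*}
Y^*_E \;=\; Ch_C(X'_E) \;=\; Ch_C(X'_E \setminus D),
\end{equation*}
so it suffices to establish $X'_E \setminus D = X'_O$. Given that equality, $Y^*_E = Ch_C(X'_O) = Y^*_O$ follows immediately.

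I would prove $X'_E \setminus D = X'_O$ by coupling the two GDA runs. Concretely, I would argue by induction on the iteration count of the original GDA, constructing a correspondence $\tau$ from original iterations to extended iterations under the invariant that at iteration $k$ on the original side and iteration $\tau(k)$ on the extended side, (i) the tentative matching $Z$ is identical and (ii) the non-dummy portion of the rejection set $Re \setminus D$ is identical. Between $\tau(k)$ and $\tau(k{+}1)$ the extended run may perform several ``cleanup'' iterations: in each, some student's top non-rejected contract happens to be a dummy; she offers it; by the helper identity it is rejected without disturbing the tentative matching; and she advances to her next preference. Once no student's top non-rejected contract is a dummy, the extended GDA's next iteration offers precisely the same non-dummy contracts as the next original iteration, processes them through the same $Ch_C$ (again by the helper identity), and produces the same rejections, so the invariant is preserved.

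The main obstacle is to make this bookkeeping tight, since dummies may be interleaved anywhere in a student's preference list, and a freshly rejected non-dummy contract may uncover a new dummy as the student's next top-ranked non-rejected contract, triggering another bout of cleanup. However, because these cleanup iterations are inert at the level of both the tentative matching $Z$ and the non-dummy rejection record $Re \setminus D$, the invariant is maintained throughout the entire execution. At termination, both runs have offered the same non-dummy contracts, giving $X'_E \setminus D = X'_O$; applying $Ch_C$ to this common set yields the identical final matching in both markets, proving the lemma.
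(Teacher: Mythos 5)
Your key observation---that a set of contracts containing a dummy contract has $f$-value $-\infty$ while the empty set is feasible, so $Ch_C(Y)=Ch_C(Y\setminus D)$ and dummies are rejected the moment they are offered---is correct and is indeed the reason the lemma is true. However, your route through an execution-level coupling has a genuine gap, and it is precisely at the step you flag as ``the main obstacle.'' In Mechanism~\ref{alg:gda} all students offer \emph{simultaneously} in each iteration. When a student's top non-rejected contract is a dummy, she offers the dummy \emph{instead of} the non-dummy contract she would be offering in the original run, so during your ``cleanup'' iterations the non-dummy portion of the offer set is missing that student's real contract. By substitutability, $Ch_C$ applied to this smaller set rejects weakly fewer contracts, so a contract of some \emph{other} student $s'$ that is rejected at the corresponding iteration of the original run may temporarily survive in the extended run; $s'$ then does not advance down her list, and the two runs' tentative matchings and non-dummy rejection records genuinely diverge at intermediate iterations. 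Your invariant (i)--(ii) therefore does not hold for the natural choice of $\tau$, and the claim that cleanup iterations are ``inert'' at the level of both $Z$ and $Re\setminus D$ is false as stated. What rescues the argument is an order-independence property of deferred acceptance under substitutability and the law of aggregate demand (essentially what Lemmas~\ref{prop:ord-ir} and \ref{prop:GDA-opt} encode), but that is the substantive content you would need to supply; note also that you cannot derive $X'_E\setminus D=X'_O$ from the characterization $X'_s=\{x\mid x\succeq_s Y^*_s\}$, since that presupposes the equality of outputs you are trying to prove.

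The paper avoids all of this bookkeeping with a two-line argument at a higher level of abstraction: GDA outputs the \emph{unique student-optimal HM-stable matching} (a fact from \citet{kty:2018} already quoted in the text). Since any matching containing a dummy contract is infeasible in the extended market, the sets of feasible matchings---and hence the sets of HM-stable matchings---coincide in the two markets, so their student-optimal elements coincide. If you want to keep your operational strategy, the cleanest repair is to show that the extended run's output is HM-stable in the original market and invoke the same optimality characterization, rather than tracking the two executions iteration by iteration.
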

\begin{proof}
In both cases, GDA obtains the unique student-optimal HM-stable matching. 
By the distributional constraints, any matching that contains
a contract between a student and a dummy college is infeasible in the 
extended market. Thus, the set of all feasible matchings, as well as 
the set of all HM-stable matchings, must be exactly the same in these 
two markets.
Thus, if $Y$ is the student-optimal HM-stable matching 
in the original market, it is also the student-optimal HM-stable matching 
in the extended market. 
\end{proof}

\begin{lemma}
Mechanisms~\ref{alg:gda} and \ref{alg:gda-alt} return exactly the same matching. 
\end{lemma}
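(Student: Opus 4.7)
The plan is to simulate Mechanism~\ref{alg:gda-alt} by a single run of Mechanism~\ref{alg:gda} on an extended market obtained by the construction of Lemma~\ref{lemma:dummy}, and then appeal to that lemma to transfer the output back to the original market $M$.

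Concretely, fix the student $s \in S$ chosen at Step 1 of Mechanism~\ref{alg:gda-alt}. Construct an extended market $M'$ by adjoining $T$ dummy colleges (with $T$ exceeding the maximum possible number of GDA rounds, e.g.\ $T=|X|+2$), and modifying only $s$'s preference list so that all $T$ dummies appear, in some fixed order, at the top, followed by $s$'s original ranking over real colleges. All other students' preferences remain unchanged. Because any matching assigning a student to a dummy college is infeasible, $Ch_C$ on $M'$ assigns value $-\infty$ to any subset containing a dummy contract and therefore never selects one; in particular, every proposal $s$ makes to a dummy is rejected on the next step. Lemma~\ref{lemma:dummy} then guarantees that Mechanism~\ref{alg:gda} produces the same matching on $M'$ as on $M$, so it suffices to show that Mechanism~\ref{alg:gda} on $M'$ returns exactly what Mechanism~\ref{alg:gda-alt} returns on $M$.

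Next, I trace Mechanism~\ref{alg:gda} executed on $M'$. Throughout the dummy phase, $s$'s offers are rejected immediately and thus never influence any other tentative acceptance, so the remaining students' offers are processed by $Ch_C$ exactly as in a run of Mechanism~\ref{alg:gda} applied to $X\setminus X_s$. For $T$ large enough, this restricted sub-execution terminates before $s$ exhausts her dummies, reaching precisely the state $(Z, Re)$ prescribed by Step 2 of Mechanism~\ref{alg:gda-alt}. After that, $s$ proposes her top real contract (none of her real contracts has been rejected yet), which coincides with the first offer of Step 3 of Mechanism~\ref{alg:gda-alt}. By the \citet{kty:2018} property cited immediately below Mechanism~\ref{alg:gda-alt}---that when $Z=Ch_C(Z)$, adding one contract produces $Ch_C(Z\cup\{x\})$ equal to $Z$, $Z\cup\{x\}$, or $Z\cup\{x\}\setminus\{z\}$---each subsequent round of Mechanism~\ref{alg:gda} on $M'$ consists of exactly one new offer followed by at most one rejection, faithfully mirroring the loop of Steps 3--4 in Mechanism~\ref{alg:gda-alt}. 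The two executions therefore terminate in lockstep with identical matchings, and combining this with Lemma~\ref{lemma:dummy} yields the claim.

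The main obstacle is justifying the lockstep correspondence rigorously after the dummy phase, in particular verifying that at most one student has an unserved proposal in each round of Mechanism~\ref{alg:gda} on $M'$ so that the ``one proposal per iteration'' discipline of Mechanism~\ref{alg:gda-alt} is preserved; this reduces to the ``at most one rejection per added contract'' property of $Ch_C$ on hereditary \Mnatural-convex sets, which is precisely the structural fact already invoked (though informally) in the specification of Mechanism~\ref{alg:gda-alt}.
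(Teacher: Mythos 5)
Your proposal is correct and follows essentially the same route as the paper's proof: both construct the extended market with sufficiently many dummy colleges placed at the top of $s$'s preference list, invoke Lemma~\ref{lemma:dummy} to identify the extended-market outcome with that of Mechanism~\ref{alg:gda} on the original market, and then observe that after the dummy phase the execution coincides step-for-step with Mechanism~\ref{alg:gda-alt}. Your version is, if anything, slightly more explicit about why the lockstep correspondence holds via the at-most-one-rejection property of $Ch_C$.
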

\begin{proof}
Let case (i) and case (ii) 
denote the cases where a matching is obtained by 
Mechanism~\ref{alg:gda} and Mechanism~\ref{alg:gda-alt}, 
respectively. 
Let us consider another case, which we call case (iii). 
Here, we extend the market by 
adding $t$ dummy colleges $c_{d_1}, \ldots, c_{d_t}$
as Lemma~\ref{lemma:dummy}.
Also, the preference of student $s$ is modified s.t.
she prefers each dummy college over any original college. 
Let us assume a matching is obtained by Mechanism~\ref{alg:gda} in case (iii). 
By Lemma~\ref{lemma:dummy}, the matching obtained in case (iii) 
must be identical to case (i). 

Also, in case (iii), student $s$ applies to her original first-choice college only after she is rejected by all dummy colleges.
Let us assume $t$ is large enough 
(say, larger than $|X\setminus X_s|$ in the original market). 
Then, the assignment among students except for $s$
must be tentatively settled without $s$. 
Then, the remaining procedures in case (iii) after $s$ applies
to her original first-choice college, 
must be identical to those in case (ii) after Line~3 in Mechanism~\ref{alg:gda-alt}.
Thus, the matching obtained in case (ii) must be identical to 
that in case (iii).
Then, the obtained matchings in cases (i) and (ii) must be 
identical. 
\end{proof}

\subsection{Adaptive Deferred Acceptance (ADA)}
\citet{goto:17} propose a general algorithm
 called \emph{Adaptive
   Deferred Acceptance (ADA)} 
 that is strategyproof and nonwasteful when $f$ is hereditary. 
  ADA utilizes master-list $L$. 
It also utilizes the maximum quota of each college $q_{c_i}$, 
which is given as $\max_{\nu \mid f(\nu)=0} \nu_i$, 
i.e., the maximum number of students assigned to $c_i$
for any feasible matching. 
ADA works by iteratively invoking the standard DA.
 In ADA, college $c$ becomes forbidden if it cannot
 accommodate any additional student without 
 violating feasibility constraints,
  even though its 
 maximum quota is not reached yet.  
 Each stage of ADA comprises multiple rounds, 
 where students are added one by one according to the master-list, as long as no college becomes forbidden.
  When a college becomes forbidden, the mechanism
 finalizes the current matching and moves to the next stage
 with updated 
 maximum quotas.

 The formal description of ADA is given in Mechanism~\ref{algo:ada}.

 \begin{algorithm}[t]
     \begin{algorithmic}[0]
         \caption{Adaptive Deferred Acceptance (ADA)}
\label{algo:ada}         
\REQUIRE master-list $L=(s_1, s_2, \ldots, s_n)$,
$S, C, X, \succ_S, \succ_C, f$, and maximum quotas $q_C$
\ENSURE matching $Y$
\STATE \textbf{Initialization}: 
    $q^1_c \leftarrow q_c$ for each
$c \in C$, $Y \leftarrow \emptyset$. Go to 
\textbf{Stage}~1.
\STATE \textbf{Stage} $k$: Proceed to \textbf{Round} 1.
\STATE \textbf{Round} $t$: Select $t$ students from the top of $L$.
Let $Y'$ denote the matching obtained by the standard DA
for the selected students
under $(q^k_c)_{c \in C}$.
\STATE \rm{(i)}  If all students in $L$ are already selected,
then $Y\leftarrow Y \cup Y'$, output $Y$ and terminate the mechanism.

\STATE \rm{(ii)} If some college $c_i \in C$ is forbidden, 
i.e., $|Y'_{c_i}|<q^k_{c_i}$ and $f(\nu(Y\cup Y') + \cvec{i}) = -\infty$ hold, 
then $Y \leftarrow Y \cup Y'$.
Remove $t$ students from the top of $L$.
For each college $c$ that is forbidden, 
set $q^{k+1}_{c}$ to $0$.
For each $c \in C$, which is not forbidden,
set $q^{k+1}_c$ to $q^k_c - |Y'_c|$. 
Go to \textbf{Stage} $k+1$.
\STATE \rm{(iii)} 
Otherwise, go to \textbf{Round} $t+1$.
\end{algorithmic}
\end{algorithm}

To run ADA
in Example~\ref{ex:common-example}, we set the maximum quota of each college
$q_{c_i} = \max_{\nu \mid f(\nu)=0} \nu_i = 3$. 
At Round 3 of Stage 1, 
three students $s_1, s_2, s_3$ 
are tentatively assigned by DA; the obtained matching is
$\{(s_1, c_1), (s_2, c_1), (s_3, c_1)\}$.
Then, $c_2$ and $c_3$ are forbidden, since no more student can be assigned to them
(although no student is currently assigned to them).  
Thus, this matching is finalized. 
Next, $s_4$ is tentatively assigned to $c_4$.
Then, $c_4$ and $c_5$ are forbidden; we finalize this assignment. 
The current matching is:
$\{(s_1, c_1), (s_2, c_1), (s_3, c_1), (s_4, c_4)\}$.
All colleges except for $c_6$ are forbidden. 
Thus, the remaining students are assigned to it.
The final matching is: 
$\{(s_1, c_1), (s_2, c_1), (s_3, c_1), (s_4, c_4), (s_5, c_6), (s_6, c_6)\}$. 
This is identical to the matching obtained by SD.
Note that $q_c$ 
is not tight, i.e., 
whenever the number of students assigned to college $c$ reaches 
$q_c$, the number of students assigned to the region that includes $c$ also reaches its regional maximum quota. As a result, 
ADA fixes the current matching; students do not compete based on 
$\succ_C$.

\citet{goto:17} 
experimentally show that ADA is fairer
(i.e., fewer students have 
justified envy) than SD and more efficient than ACDA. 
However, we identify the following limitations of ADA:
(i) it cannot obtain a fair matching even when distributional constraints form 
a hereditary \Mnatural-convex set, 
while GDA is strategyproof, fair, and weakly nonwasteful, and
(ii) it can be very unfair when colleges' maximum quotas are not tight.
To be more precise, 
ADA is fairer than SD when students compete for 
the seats of a college based on the college's preference. 
When colleges' maximum quotas are not tight, 
ADA becomes quite similar to SD and too unfair since there is no competition among students. 
As a result, the mechanism becomes too unfair. 

Let us illustrate the major difference between ADA and MS-GDA.
As ADA repeatedly applies DA, 
MS-GDA uses GDA as a subroutine. 
ADA adds students one by one and obtains 
a matching ignoring distributional constraints except for colleges' maximum quotas. 
To make sure that the obtained matching is feasible, ADA checks whether
the next matching has a chance to be infeasible, i.e., for some college $c_i$, 
$f(\nu(Y\cup Y') + \cvec{i})=-\infty$ holds although $|Y'_{c_i}|< q^k_{c_i}$. If so, 
ADA fixes the current matching. 
If we use GDA instead of DA as a subroutine, we cannot apply a similar approach to ADA. 
Assume $Z$ is the original matching, and $Y$ is the matching obtained after introducing 
one more student. In DA, although $Z$ and $Y$ can be very different, 
$\nu(Z)$ and $\nu(Y)$ are quite similar, i.e., 
either $\nu(Z)=\nu(Y)$ holds, or for some $i\in M$, 
$\nu(Z) + \cvec{i}=\nu(Y)$ holds. Thus, it is possible to check whether the next matching has 
a chance to be infeasible, just by checking the neighbors of $\nu(Y)$. 
In GDA, $\nu(Z)$ and $\nu(Y)$ can be completely different. This is because, due to distributional constraints
in which  multiple colleges are involved, 
if the newly added student applies to college $c$ and gets accepted, 
another student can be rejected by another college $c'$, where $c \neq c'$. 
Then, this student applies to another college $c''$, and so on. 
Thus, we cannot apply an adaptive approach similar to ADA.
As a result, in MS-GDA, we first calculate positive integer $d$, which represents the number of students who can be 
assigned safely without violating distributional constraints, and assign $d$ students at once. 
\section{New Mechanism: Multi-Stage Generalized Deferred Acceptance (MS-GDA)}
\label{sec:new-mechanism}
This section describes our newly developed mechanism, which we call
Multi-Stage Generalized Deferred Acceptance (MS-GDA). 
In MS-GDA, we first calculate positive integer $d$, which represents the number of students who can be 
assigned safely without violating distributional constraints. More specifically, we choose $d$ s.t. 
if we restrict $|\nu|$ (i.e., the size of vector $\nu$) within $d$, 
the distributional constraints can be considered as 
an \Mnatural-convex set (s.t. we can apply GDA).
\citet{fragiadakis::2012} develop a mechanism 
called Multi-Stage Deferred Acceptance (MS-DA), which is based on 
a similar idea to MS-GDA. However, MS-DA uses DA as a subroutine and can handle only 
individual minimum quotas.

MS-GDA is defined as Mechanism~\ref{algo:agda}. 
It repeats several stages. In the $k$-th stage, it uses 
function $f_k: \mathbf{Z}^m_+ \rightarrow \{-\infty, 0\}$, which is derived from $f$. 
For $f_k$ and positive integer $d$, 
let $f^d_k: \mathbf{Z}^m_+ \rightarrow \{-\infty, 0\}$ denote
a function defined as follows:
\begin{equation*}
f^d_k(\nu) =
  \begin{cases}
    f_k(\nu) & \text{if $|\nu|\leq d$,} \\
    -\infty  & \text{otherwise.}
  \end{cases}
\end{equation*}


\begin{algorithm}[t]
     \begin{algorithmic}[0]
         \caption{Multi-Stage Generalized Deferred Acceptance (MS-GDA)}
\label{algo:agda}         
\REQUIRE master-list $L=(s_1, s_2, \ldots, s_n)$, 
$S, C, X, \succ_S, \succ_C$, 
and $f$.
\ENSURE matching $Y$
\STATE \textbf{Initialization}: 
$Y\leftarrow \emptyset$, $f_1 \leftarrow f$, 
go to \textbf{Stage}~1.
\STATE \textbf{Stage} $k$
\IF{no student remains in $L$}
\STATE terminate the mechanism and return $Y$. 
\ELSE \STATE  Choose positive integer $d$ s.t. $f^d_k$ induces an  \Mnatural-convex set. 
\STATE  Remove remaining top $d$ students from $L$ and assign 
them using GDA based on $f^d_k$. Let $Y^k$ denote the obtained matching. 
\STATE  
$Y\leftarrow Y \cup Y^k$. 
\STATE  Let $f_{k+1}$ denote a function s.t. 
$f_{k+1}(\nu) = f(\nu + \nu(Y))$ holds. 
\STATE  Go to \textbf{Stage} $k+1$. 
\ENDIF
\end{algorithmic}
\end{algorithm}

At each Stage~$k$, MS-GDA chooses positive integer $d$ s.t. 
$f^d_k$ induces an \Mnatural-convex set. 
Theorem \ref{theorem:existence-d} 
shows that we can always choose such $d$.
There can be multiple choices for $d$. In particular, 
if $f^d_k$ induces an \Mnatural-convex set, 
for any $d'$ s.t. $0< d' < d$, $f^{d'}_k$ also induces an \Mnatural-convex set. 
We will discuss how to choose appropriate $d$ 
later in this section. 

\begin{my-theorem}
\label{theorem:existence-d}
There exists $d\geq 1$ s.t. 
$f^d_k$ induces an \Mnatural-convex set. 
\end{my-theorem}
\begin{proof}
We show that we can choose $d=1$, i.e.,
$f^1_k$ always induces an \Mnatural-convex set.
Let $F_1$ denote the induced vectors of $f^1_d$, i.e., 
$F_1 = \{\nu \mid f_k(\nu) =0, |\nu| \leq 1\}$. 
By definition, $F_1 \subseteq \{\cvec{0}\}\cup \bigcup_{i\in M} \{\cvec{i}\}$
and $\cvec{0} \in F_1$. 
Then, for any two elements $v, v' \in F_1$, 
if one element $v$ is equal to $\cvec{i}$, 
and another element $v'$ is equal to $\cvec{0}$, 
$v_i > v'_i$ holds. 
Then, $v - \cvec{i} + \cvec{0}=\cvec{0} \in F_1$, 
and $v' + \cvec{i} - \cvec{0}=\cvec{i} \in F_1$ hold. 
If one element $v$ is equal to $\cvec{i}$, 
and another element $v'$ is equal to $\cvec{j}$, where $i\neq j$,  
$v_i > v'_i$ and $v_j < v'_j$ hold.
Then $v - \cvec{i} + \cvec{j}=\cvec{j} \in F_1$, 
and $v' + \cvec{i} - \cvec{j}=\cvec{i} \in F_1$ hold. 
Thus, $F_1$ forms an \Mnatural-convex set. 
\end{proof}

To run MS-GDA in Example~\ref{ex:common-example}, 
we need to define contract weights. 
Let us assume the weights are defined as:
$w((s_6, *)) > w((s_5, *)) > ... > w((s_1, *))$ holds, i.e., 
all contracts related to $s_6$ has a larger weight than 
any contracts related to $s_5$, and so on. 
In the first stage of MS-GDA, we can set $d=4$, since 
the inequality (iii) holds whenever $|\nu|\leq 4$. 
By assigning the first 4 students based on GDA, 
the obtained matching is:
$\{(s_1, c_4), (s_2, c_1), (s_3, c_1), (s_4, c_1)\}$.
Then, (iii) holds only when no more student is assigned to 
non-rural colleges, i.e., $f_2(\nu)=0$ only when 
$v_1=v_2=v_4 = v_5=0$ holds, which means that (iii) no longer violates 
\Mnatural-convexity. 
Thus, in the second stage, we can set $d=2$. 
Then, $s_5$ and $s_6$ are assigned to $c_6$. 
The final matching is: 
$\{(s_1, c_4), (s_2, c_1), (s_3, c_1), (s_4, c_1), (s_5, c_6), 
(s_6, c_6)\}$.
Here, $s_5$ and $s_6$ have justified envy. 

MS-GDA satisfies several required/desirable properties, i.e., 
feasibility, strategyproofness, weak nonwastefulness, ML-fairness, 
and on the Pareto frontier. 

\begin{my-theorem}
MS-GDA obtains a feasible matching. 
\end{my-theorem}
\begin{proof}
Let $Y$ denote the matching obtained by MS-GDA.
For each iteration of GDA, the choice function of each student
chooses at most one acceptable contract.
Also, each student is selected exactly once in the execution of MS-GDA. 
Thus, $Y$ is a matching, i.e., $Y$ contains at most one contract for 
each student, and each contract is acceptable for the relevant student. 
Next, we show that $Y$ is feasible. 
Assume MS-GDA terminates at Stage $t+1$.
Then, $Y= Y^1 \cup Y^2 \cup \ldots \cup Y^t$ holds. 
Since GDA obtains a feasible matching, 
$f^d_t(\nu(Y^t)) =0$ holds. Also, since at most 
$d$ students are assigned to colleges, 
$|\nu(Y^t)| \leq d$ holds.
Thus, $f_t(\nu(Y^t))=0$ holds. 
By definition, $f_t(\nu(Y^t)) = 
f(\nu(Y^t) +\nu(Y^1 \cup Y^2 \cup \ldots \cup Y^{t-1}))
= f(\nu(Y))$. Thus, $f(\nu(Y))=0$ holds. 
\end{proof}

\begin{my-theorem}
MS-GDA is strategyproof. 
\end{my-theorem}
\begin{proof}
Assume, for the sake of contradiction, that MS-GDA is not strategyproof, i.e., 
there exists student $s$, whose true preference is $\succ_s$. 
Let $Y$ denote the matching when $s$ declares $\succ_s$,
and $Y'$ denote the matching when $s$ declares 
$\widetilde{\succ}_s$. 
We assume $Y'_s \succ_s Y_s$ holds, i.e., $s$ is assigned to a better college by 
misreporting her preference. 
For student $s$, the stage that she is chosen is determined independently of her 
declared preference. Assume student $s$ is chosen at Stage $t$. 
Then, the matching obtained before Stage $t$ 
does not change even if $s$ modifies her preference. 
Thus, student $s$ is assigned to a better college by 
modifying her preference from $\succ_s$ to $\widetilde{\succ}_s$ in GDA
under $f^d_t$. 
However, since $f^d_t$ is hereditary and induces an 
\Mnatural-convex set, GDA must be strategyproof. 
This is a contradiction. 
\end{proof}

We next prove two lemmas, which are used in the proof of Theorem~\ref{theorem:weak-nonwasteful-igda}. The first 
(Lemma~\ref {lemma:sequence}) 
plays a pivotal role in demonstrating the correctness of the second (Lemma~\ref{lemma:first-stage}).
\begin{lemma}
\label{lemma:sequence}
Let $F$ be a hereditary \Mnatural-convex set,
and $v^1, v^2, \ldots, v^t$ be a sequence of $m$-element 
vectors, which satisfies 
the following properties.
\begin{enumerate}
    \item For each $k\in \{1, \ldots, t\}$, $v^k \in F$. 
    \item For each $k \in \{1, \ldots, t-1\}$,
    (i) $v^{k+1} = v^{k}+\cvec{k_1} - \cvec{k_2}$, 
    where $k_1, k_2 \in M$ and $k_1\neq k_2$, holds, 
    and 
    (ii) 
    $v^{k+1} + \cvec{k_2}=v^{k}+\cvec{k_1} \not\in F$ holds.
\end{enumerate}
Then, for each $\ell \in M$, if $v^{k+1} + \cvec{\ell} \in F$ holds,  
$v^k + \cvec{\ell} \in F$ holds.
\end{lemma}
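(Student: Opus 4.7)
The plan is, for each fixed $k \in \{1, \ldots, t-1\}$ and $\ell \in M$, to assume $v^{k+1} + \cvec{\ell} \in F$ and perform a short case analysis on whether $\ell$ coincides with the ``loser'' index $k_2$ from property~2. In the main case I would apply the \Mnatural-convex exchange axiom (Definition~\ref{def:mnatural}) to the pair $a := v^{k+1} + \cvec{\ell}$ and $b := v^k$, both of which lie in $F$ by hypothesis and by property~1.

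First I would dispatch the trivial case $\ell = k_2$: assumption 2(ii) says $v^{k+1} + \cvec{k_2} = v^k + \cvec{k_1} \notin F$, which contradicts the hypothesis $v^{k+1} + \cvec{\ell} \in F$, so the implication holds vacuously.

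For the main case $\ell \neq k_2$, I would examine the signs of $a - b = \cvec{k_1} - \cvec{k_2} + \cvec{\ell}$ coordinate by coordinate. In every subcase one finds $a_{k_1} > b_{k_1}$ (since coordinate $k_1$ strictly increases from $b$ to $a$) and $\{r \in M : a_r < b_r\} = \{k_2\}$ (the only coordinate on which $a$ is strictly below $b$). Applying \Mnatural-convexity at $i = k_1$ then yields an index $j \in \{0, k_2\}$ such that both $b + \cvec{k_1} - \cvec{j}$ and $a - \cvec{k_1} + \cvec{j}$ lie in $F$. The option $j = 0$ is ruled out because it would give $b + \cvec{k_1} = v^k + \cvec{k_1} \in F$, directly contradicting assumption 2(ii); so $j = k_2$, and this rearranges to $a - \cvec{k_1} + \cvec{k_2} = v^k + \cvec{\ell} \in F$, the desired conclusion.

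The one subtlety I would flag is the degenerate subcase $\ell = k_1$: there the conclusion $v^k + \cvec{\ell} \in F$ produced by $j = k_2$ becomes $v^k + \cvec{k_1} \in F$, which \emph{also} contradicts 2(ii). This simply means both exchange options collapse to contradictions, showing that the original hypothesis $v^{k+1} + \cvec{k_1} \in F$ cannot hold in the first place, so the implication is again vacuous in that subcase. I expect the main obstacle to be precisely this bookkeeping: verifying that $\{r : a_r < b_r\}$ really is the singleton $\{k_2\}$ uniformly across subcases, since that is what forces the \Mnatural-convex axiom to yield the clean dichotomy $j \in \{0, k_2\}$ and lets assumption 2(ii) single out $j = k_2$.
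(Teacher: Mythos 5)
Your proof is correct and follows essentially the same route as the paper's: both apply the \Mnatural-convex exchange axiom to the pair $v^{k+1}+\cvec{\ell}$ and $v^k$, observe that the only decreasing coordinate is $k_2$ so that $j\in\{0,k_2\}$, and use the single infeasible vector $v^{k+1}+\cvec{k_2}=v^k+\cvec{k_1}\notin F$ from assumption 2(ii) to eliminate one of the two options. The only (immaterial) difference is the choice of exchange index --- the paper pivots on $i=\ell$ and rules out $j=k_2$, whereas you pivot on $i=k_1$ and rule out $j=0$; both land on $v^k+\cvec{\ell}\in F$.
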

\begin{proof}
Assume $v' = v^{k+1} + \cvec{\ell} \in F$ holds. 
Then $\ell\neq k_2$, since 
    $v^{k+1} + \cvec{k_2} \not\in F$ holds.
For $v'$ and $v^k=v^{k+1} -\cvec{k_1}+\cvec{k_2}$,
$v'_{\ell}> v^{k}_{\ell}$ holds. 
$v'_j < v^k_j$ holds only when $j=k_2$.
Since $F$ is an \Mnatural-convex set, 
there exists $j \in \{0, k_2\}$, where
$v' - \cvec{\ell} + \cvec{j} \in F$ 
and $v^k + \cvec{\ell} - \cvec{j} \in F$ hold. 
If $j=k_2$,
$v' - \cvec{\ell} + \cvec{k_2} = v^{k+1} + \cvec{k_2} \not\in F$
holds. Thus, $j$ must be $0$. 
Then $v^k+ \cvec{\ell} \in F$ holds.
\end{proof}

To show Theorem~\ref{theorem:weak-nonwasteful-igda}, i.e., 
MS-GDA is weakly nonwasteful, we utilize 
the following lemma. 
This lemma shows that when student $s$ is assigned
in the first stage of MS-GDA, $s$ never strongly claims
an empty seat. 
Note that the fact GDA is weakly nonwasteful 
does not directly imply this lemma since GDA applied in the first stage of MS-GDA
uses $f^d_1$, which is different from $f$. 
More specifically, for $\nu$ s.t. $|\nu|=d+1$, by definition, 
$f^d_1(\nu) = -\infty$ holds, while $f(\nu)$ can be $0$. 
\begin{lemma}
\label{lemma:first-stage}
Assume in the Stage~1 of MS-GDA, $S'\subseteq S$
students are assigned (where $|S'|=d$), and 
the obtained matching at Stage 1 is $Y$. 
Then, no student in $S'$ strongly claims an empty seat in $Y$. 
\end{lemma}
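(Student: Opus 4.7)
The plan is to split on whether the stage-one matching $Y$ fills the size budget $d$. Writing $c = c_i$ and noting $|\nu(Y)|\le d$ since Stage~1 processes only the top $d$ students of $L$, the easy sub-case is $|\nu(Y)|<d$: the assumed feasibility of $Y\cup\{(s,c)\}$ in $f$, combined with $|\nu(Y\cup\{(s,c)\})|\le d$, immediately yields feasibility in $f^d_1$, and since $f^d_1$ is hereditary (by heredity of $f$) and \Mnatural-convex (by the choice of $d$ in MS-GDA), the weak nonwastefulness of GDA under $f^d_1$ directly rules this out.

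The substantive case is $|\nu(Y)|=d$. A counting argument forces every student of $S'$, and in particular $s$, to be matched in $Y$, so $Y_s=\{(s,c')\}$ with $c\succ_s c'$. I plan to invoke the alternative description of GDA (Mechanism~\ref{alg:gda-alt}) with $s$ chosen as the distinguished last student. Let $Z$ be the matching returned by GDA on $X\setminus X_s$ under $f^d_1$, and let $u^0=\nu(Z),u^1,\ldots,u^{t-1},u^t=\nu(Y)$ be the sequence of matching-count vectors generated by the rejection chain triggered by the insertion of $s$, with $u^k=u^{k-1}+\cvec{i_k}-\cvec{j_k}$ for $k<t$ (one offer followed by one rejection) and $u^t=u^{t-1}+\cvec{i_t}$ (terminal offer with no rejection). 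Every $u^k$ lies in $F^d_1$, while each interpolated offered vector $u^{k-1}+\cvec{i_k}$ lies outside $F^d_1$ for $k<t$ since GDA did not terminate there. This exactly matches the hypotheses of Lemma~\ref{lemma:sequence} applied to the initial segment $u^0,u^1,\ldots,u^{t-1}$ in $F^d_1$.

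Because $c\succ_s c'$ and $s$ ultimately matches to $c'$, her offers (made in preference order) include $(s,c)$ at some iteration $k^*\le t-1$, and the fact that GDA does not terminate at $k^*$ yields $u^{k^*-1}+\cvec{i}\notin F^d_1$. I will derive a contradiction by showing $u^{k^*-1}+\cvec{i}\in F^d_1$. The assumption $\nu(Y)+\cvec{i}\in F$, together with heredity of $f$ and the fact that $\nu(Y)_{i_t}\ge 1$ (the terminal offer is accepted into $Y$), gives $u^{t-1}+\cvec{i}=\nu(Y)+\cvec{i}-\cvec{i_t}\in F$; this size-$d$ vector automatically lies in $F^d_1$, and iterated backward application of Lemma~\ref{lemma:sequence} then propagates membership from $u^{t-1}+\cvec{i}$ down to $u^{k^*-1}+\cvec{i}\in F^d_1$, the needed contradiction.

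The hard part is precisely the $|\nu(Y)|=d$ case, because $Y\cup\{(s,c)\}$ has size $d+1$ and is therefore automatically infeasible in $f^d_1$, so the weak nonwastefulness of GDA under $f^d_1$ cannot be invoked directly. Lemma~\ref{lemma:sequence} is exactly the tool built to bridge this gap: it trades the sized-out vector $\nu(Y)+\cvec{i}$ for an earlier size-$d$ vector $u^{k^*-1}+\cvec{i}$ that should have forced GDA to terminate earlier. A secondary bookkeeping concern is confirming that $s$ really does offer $c$ at a pre-terminal iteration, which follows from her offers occurring in preference order while her final match is the less-preferred $c'$.
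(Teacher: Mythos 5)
Your proposal is correct and follows essentially the same route as the paper's proof: the same split on $|\nu(Y)|<d$ versus $|\nu(Y)|=d$, the same reduction of the hard case to the alternative description of GDA (Mechanism~\ref{alg:gda-alt}) with $s$ inserted last, and the same backward propagation of feasibility of $\nu+\cvec{i}$ along the rejection-chain trace via Lemma~\ref{lemma:sequence}, anchored by heredity at the penultimate vector. The only cosmetic difference is that you locate the contradiction at the iteration where $(s,c)$ is first offered, while the paper locates it at the iteration where $(s,c)$ is rejected; both are valid and equivalent here.
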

\begin{proof}
Assume, for the sake of contradiction, $s \in S'$
strongly claims an empty seat of $c_{\ell}$ in $Y$, i.e., 
$(s, c_{\ell})$ is acceptable for $s$, 
$(s, c_{\ell}) \succ_s Y_s$, and $f(\nu(Y) + \cvec{\ell})=0$ hold. 

There are two possible cases: (i) $|Y|<d$, and (ii) $|Y|=d$. 
In case (i), $f^1_d(\nu(Y)+\cvec{\ell}) = f(\nu(Y)+\cvec{\ell}) = 0$ holds
since $f^1_d(\nu) = f_d(\nu) = f(\nu)$ holds for any $\nu$ s.t. $|\nu|\leq d$. 
Then, for $Z=Y\cup\{(s,c_{\ell})\}$, 
$(s, c_{\ell}) \in Ch_S(Z)$ holds since $(s, c_{\ell}) \succ_s Y_s$ holds. 
Also, $Ch_C(Z)=Z$ holds for $Ch_C$ based on $f^1_d$, 
since $Z$ is feasible in $f^1_d$. Thus, $(s, c_{\ell}) \in Ch_C(Z)$ holds. 
However, this contradicts the fact that $Y$, which is obtained by GDA, must be HM-stable. 

For case (ii), let us examine the execution of Mechanism~\ref{alg:gda-alt}, where
students $S'\setminus\{s\}$ are first assigned, then $s$ offers her most preferred contract, and so on. 
More specifically, let us examine how $\nu(Z)$, where $Z$ is the tentatively accepted contracts, 
transits until Mechanism~\ref{alg:gda-alt} terminates. 
Initially, $Z$ is the matching obtained by assigning $S'\setminus\{s\}$. 
Since we assume $|Y|=d$, $|Z|=d-1$ holds.
In Step 3 of Mechanism~\ref{alg:gda-alt},
for $Z$, 
one contract 
$(\hat{s}, c_i)$ is added, and $(s', c_j)$ is rejected/removed (or no contract is rejected, and the mechanism terminates). 
When $i=j$, $\nu(Z)$ does not change. 
When $i\neq j$, the $i$-th element is increased by one, and the $j$-th element is decreased by one
in $\nu(Z)$. Let $\nu$ denote the original vector, and $\nu'$ denote the new vector. 
$\nu + \cvec{i} - \cvec{j} = \nu'$ holds. Also, since $(s', c_j)$ is rejected, 
$\nu+\cvec{i}= \nu'+\cvec{j}$ must be infeasible (both in $f^1_d$ and $f$). 
After repeating such changes, eventually, $Z\cup\{(\hat{s}, c_i)\}$ becomes feasible and 
Mechanism~\ref{alg:gda-alt} terminates. 
The obtained matching is $Y = Z\cup \{(\hat{s}, c_i)\}$.
Since we assume $Y \cup \{(s, c_{\ell})\}$ is feasible,
$Z \cup \{(s, c_{\ell})\}$ is also feasible.
Then, by repeatedly applying Lemma~\ref{lemma:sequence} for the sequence of vectors that represents
the trace of $\nu(Z)$, we obtain that, for each vector $\nu^k$ that appears in the sequence, 
$\nu^k + \cvec{\ell}$ is feasible. 
However, in the execution of Mechanism~\ref{alg:gda-alt}, 
there must exists a situation where $(s, c_{\ell})$ is rejected, i.e., 
for some $\nu^k$, $\nu^k + \cvec{\ell}$ must be infeasible. 
This is a contradiction. 
\end{proof}

\begin{my-theorem}
\label{theorem:weak-nonwasteful-igda}
MS-GDA obtains a weakly nonwasteful matching. 
\end{my-theorem}
\begin{proof}
Assume, for the sake of contradiction, student $s$, who is assigned 
at Stage~$k$, strongly claims an empty seat in MS-GDA. 
By Lemma~\ref{lemma:first-stage}, $k$ cannot be 1. 
Let $S'$ denote students assigned at stages $1, \ldots, k-1$, and
$Y'$ denote the matching obtained after Stage $k-1$.  
Then, let us consider another (reduced) market where 
$S$ is changed to $S\setminus S'$, and 
$f$ is changed to $f^k$, i.e., all students $S'$ are removed, 
while the distributional constraints are modified in the way that 
$Y'$ is already fixed. Then, it is clear that $s$ strongly claims an empty seat 
in MS-GDA in this new market, and $s$ is assigned in the first stage. 
However, this contradicts Lemma~\ref{lemma:first-stage}, which shows 
that no student assigned in the first stage strongly claims an empty seat. 
\end{proof}

\begin{my-theorem}
MS-GDA is ML-fair. 
\end{my-theorem}
\begin{proof}
Assume, for the sake of contradiction, 
student $s$, who is assigned in Stage~$k$, has justified envy toward another 
student $s'$, while $s \succ_{L} s'$ holds. 
Then, $s'$ must be assigned in State~$k'$, where $k'\geq k$ holds. 
Since GDA is fair, student $s$ never has justified envy toward 
another student who is assigned in the same stage. Thus, $k'>k$ holds. 
Assume $s'$ is assigned to college $c_{\ell}$. 
Let $Y'$ denote the matching obtained at the end of Stage~$k$. 
Student $s$ prefers $(s, c_{\ell})$ over $Y'_s$. 
We can use a similar argument as the proof of Theorem~\ref{theorem:weak-nonwasteful-igda}
to create a situation where $s$ has justified envy when $s$ is assigned 
in the first stage, and the obtained matching in the first stage is $Y'$. 
Then, due to Lemma~\ref{lemma:first-stage}, 
$Y'\cup\{(s, c_{\ell})\}$ is infeasible, i.e., 
$f(\nu(Y') + \cvec{\ell})=-\infty$. 
Thus, no more student can be assigned to $c_{\ell}$ in later stages. 
This contradicts our assumption that $s'$ is assigned to $c_{\ell}$.
\end{proof}

\begin{my-theorem}
The matching obtained by MS-GDA is on the Pareto frontier. 
\end{my-theorem}
\begin{proof}\my{
Let $Y$ be the matching obtained by MS-GDA. 
Assume, for the sake of contradiction, that matching $Y'$
weakly dominates $Y$, and weakly fairer than $Y$.
Let $k$ be the first stage in the MS-GDA s.t. there exists
a student who is matched at Stage $k$ of MS-GDA and strictly
prefers $Y'$ over $Y$. Then, because of the choice of $k$,
for all students assigned at any Stage $k'<k$, 
their assignments in $Y$ must be identical to $Y'$. 
Let $S_k$ denote the set of students allocated in Stage $k$. 
Also, let $Y_{S_k}$ denote $\bigcup_{s \in S_k} Y_s$
and $Y'_{S_K}$ denote $\bigcup_{s \in S_k} Y'_s$. 
Assume the reduced market where only students $S_k$ are present and,
distributional constraints are given as $f_k$.
Then, MS-GDA in this reduced market obtains $Y_{S_k}$
assuming $d$ is chosen as $|Y_{S_k}|$, 
$Y'_{S_k}$ weakly dominates $Y_{S_k}$, and $Y'_{S_k}$ is weakly fairer than $Y_{S_k}$. 
Also, $Ev(Y_{S_k})=\emptyset$ holds. 
By Theorem~\ref{lemma:GDA-PF}, 
the fact that $Y'_{S_k}$ weakly dominates $Y_{S_k}$ means
there exists a pair $(s, s')$, where $s, s' \in S_k$ and
$(s, s') \in Ev(Y'_{S_k})$. Then, $Y'_{S_k}$ cannot be weakly fairer than 
$Y_{S_k}$. This is a contradiction. }
\end{proof}

\begin{my-theorem}
\my{
Assuming the computation of $f$ and $d$ can be done in polynomial time, MS-GDA runs in polynomial time.}
\end{my-theorem}
\begin{proof}\my{
Since $d\geq 1$ holds, 
the number of required stages of MS-GDA is at most $n$. 
\citet{kty:2018} shows that GDA runs in 
$O(T(f)\cdot n^2)$, where $T(f)$ is the required time to calculate $f$. 
Thus, assuming the computation of $f$ and $d$ can be done in polynomial time, MS-GDA runs in 
polynomial time.}
\end{proof}

For choosing $d$ s.t. $f^d_k$ induces an \Mnatural-convex set, 
we can always choose $d=1$. However,
choosing larger $d$ is desirable to obtain a fairer matching, 
since more students can compete based on 
colleges' preferences. 
Actually, MS-GDA becomes equivalent to SD if $d=1$ in all stages.
For smaller $d$, we can simply enumerate induced 
vectors and check whether they form an \Mnatural-convex set. 
However, this becomes intractable for larger $d$. 
Let us illustrate two cases where we can obtain appropriate $d$. 
In Section~\ref{sec:evaluation}, we use two markets that 
correspond to these cases.

The first case is that 
$f$ is divided into two parts, i.e., 
$f(\nu) := g(\nu) + h(\nu)$, 
where $g$ represents base-line distributional constraints, 
which induces an \Mnatural-convex set, and
$h$ represents some additional constraints, which violate
\Mnatural-convexity. Further assume that $h$ has a simple structure, e.g., 
it is represented by a set of linear inequalities. 
In Example~\ref{ex:common-example}, we can assume (i) and (ii) correspond to $g$ and (iii) corresponds to $h$, respectively. 
Then, at Stage $k$, 
$f_k$ is also divided into two parts, i.e., 
$f_k(\nu) := g_k(\nu) + h_k(\nu)$, and $h_k$ has a simple structure. 
Thus, it is easy to find maximum $d$ where $h_k(\nu)=0$ holds for any $\nu$ s.t. $|\nu|\leq d$, i.e., 
$f^d_k$ induces an \Mnatural-convex set. 
Also,
$h_k$ can degenerate into several very simple constraints
such that they never violate \Mnatural-convexity, as shown in the application of 
MS-GDA to Example~\ref{ex:common-example}.

\my{%
The second case is that $f$ is represented as a disjunction of two conditions.\footnote{ 
We can easily extend to the cases where 
$f$ is a disjunction of more than two conditions.}
Let us assume $f$ is defined as, 
$f(\nu)=0$ iff $\widehat{g}(\nu)+h(\nu)=0$ or 
$g(\nu)+\widehat{h}(\nu)=0$. Equivalently, we can write
$f(\nu) = \max(\widehat{g}(\nu)+h(\nu), g(\nu) + \widehat{h}(\nu))$.
Also assume each of $\widehat{g} + \widehat{h}$, $\widehat{g}+h$, 
and $g+\widehat{h}$ induces a
 hereditary \Mnatural-convex set, and for all $\nu$, 
 $\widehat{g}(\nu)\leq g(\nu)$ and  $\widehat{h}(\nu)\leq h(\nu)$ hold. 
Let us show an example, which we call flexible quotas. 
Assume there are two colleges $c_1$ and $c_2$.
The baseline maximum quota of each college is 10. 
Furthermore, we can assign 3 more students to 
exactly one college, i.e., we have some flexibility in 
maximum quotas. 
Then, we can represent flexible quotas by assuming
$\widehat{g}(\nu)=0$ iff $\nu_1 \leq 10$, 
$g(\nu)=0$ iff $\nu_1 \leq 13$, 
$\widehat{h}(\nu)=0$ iff $\nu_2 \leq 10$, and 
$h(\nu)=0$ iff $\nu_2 \leq 13$.}

\my{
Note that $f$ does not induce an \Mnatural-convex set, since 
\Mnatural-convexity is not closed under union (which corresponds to a disjunctive condition). 
Let $\widehat{f}(\nu)$ denote $\widehat{g}(\nu) + \widehat{h}(\nu)$, i.e., 
$\hat{f}(\nu)=0$ iff $\widehat{g}(\nu)=0$ and $\widehat{h}(\nu)=0$ hold. 
Here, we consider a conjunction of stricter conditions and 
$\widehat{f}(\nu)$ induces a hereditary \Mnatural-convex set.}

\my{%
During the execution of MS-GDA, we first make sure that 
the obtained matching $Y$ satisfies $\widehat{f}(\nu(Y))=0$. More specifically, in the first stage, we 
choose $d^*$ such that 
for all $\nu \in {\mathbf{Z}}_+^m$ where $|\nu|\leq d^*$ 
and $f(\nu)=0$ hold, 
$\widehat{f}(\nu)=0$ also holds.
In the above flexible quota example, we can choose 
$d^*=10$. 
If $d^*=0$, which means that 
either $\widehat{g}$ or $\widehat{h}$ can be violated by adding one more student, 
we set $d$ to $1$ and assign one student. 
Then, we repeat the same procedure; 
in stage $k$, we choose $d^*$
for $f_k(\nu) = \max(\widehat{g}_k(\nu) + h_k(\nu), 
g_k(\nu)+ \widehat{h}_k(\nu))$ 
and $\widehat{f}_k(\nu) = \widehat{g}_k(\nu) + \widehat{h}_k(\nu)$,
until either $\widehat{g}_k(e_0)=-\infty$ or 
$\widehat{h}_k(e_0)=-\infty$ holds. 
If $\widehat{g}_k(e_0)=-\infty$ (or $\widehat{h}_k(e_0)=-\infty$) holds, 
$\widehat{g}$ (or $\widehat{h}$) becomes violated.
Then, we 
set $f_k= g_k + \widehat{h}_k$ 
(or $f_k= \widehat{g}_k + h_k$) and assign all remaining students at this stage. 
This means that 
we commit to $g + \widehat{h}$ (or $\widehat{g}+h$).
In the flexible quota example, assume 6 students are 
allocated to $c_1$, and 4 students are allocated to $c_2$. 
Then, in the next stage, we can allocate $4$ students, 
and so on. When 10 students are allocated to one college,
$d^*$ becomes 0 and we can allocate one student. 
Assume in stage $k$, 11 students are allocated to $c_1$ 
in total. Then, we need to set the maximum quota of 
$c_1$ to 13 (and the maximum quota of $c_2$ to 10). 
We assign all remaining students at once
assuming $f_k=g_k + \widehat{h}_k$, i.e., the total number of students 
allocated to $c_1$ is at most 13, while the total number of students allocated to $c_2$ is at most 10.}

\section{Experimental Evaluation}
\label{sec:evaluation}

Regarding theoretical properties, 
ADA is better than MS-GDA, and SD is better than ADA; all of them satisfy ML-fairness, 
MS-GDA is weakly nonwasteful, ADA is nonwasteful, and SD is Pareto efficient.
In this section,
we experimentally show
that MS-GDA strikes a good balance between fairness and efficiency compared to ADA/SD.
We use the following two markets in our experiments.
The first market (Market 1) corresponds to the first case
described in the end of Section~\ref{sec:new-mechanism}:
$n=1000$ students, $m=100$ colleges, $|R|=20$ regions with regional quota $q_r=50$. In each region, 
one rural college and 4 non-rural colleges exist. 
We set the limit/quota $q$ for the number of students allocated to non-rural 
colleges. 
\my{%
The second market (Market 2) corresponds to the second
case, where we introduce flexibility in regional 
quotas: $n=1000$ students, $m=200$ colleges, $|R|=20$ regions with regional quota $q_r=60$. Each region has 10 colleges. The maximum quota of 
each college $q_c$ is set to $10$. 
The regions $r_1$ to $r_{10}$ (as well as $r_{11}$ to $r_{20}$) form East (and West) regions. 
The baseline regional quotas of these two large regions
are $450$. In addition, at most one large region (i.e., East or West) can accept $100$ more students.}

\def\trimbeforecaption{-0cm}
\begin{figure}[t]
	\captionsetup[subfigure]{aboveskip=5pt,belowskip=2pt}
	\centering
	\captionsetup{justification=centering}
    \subfloat[Effect of $\phi$ ($q = 800$)]
	{\includegraphics[scale = 0.23]{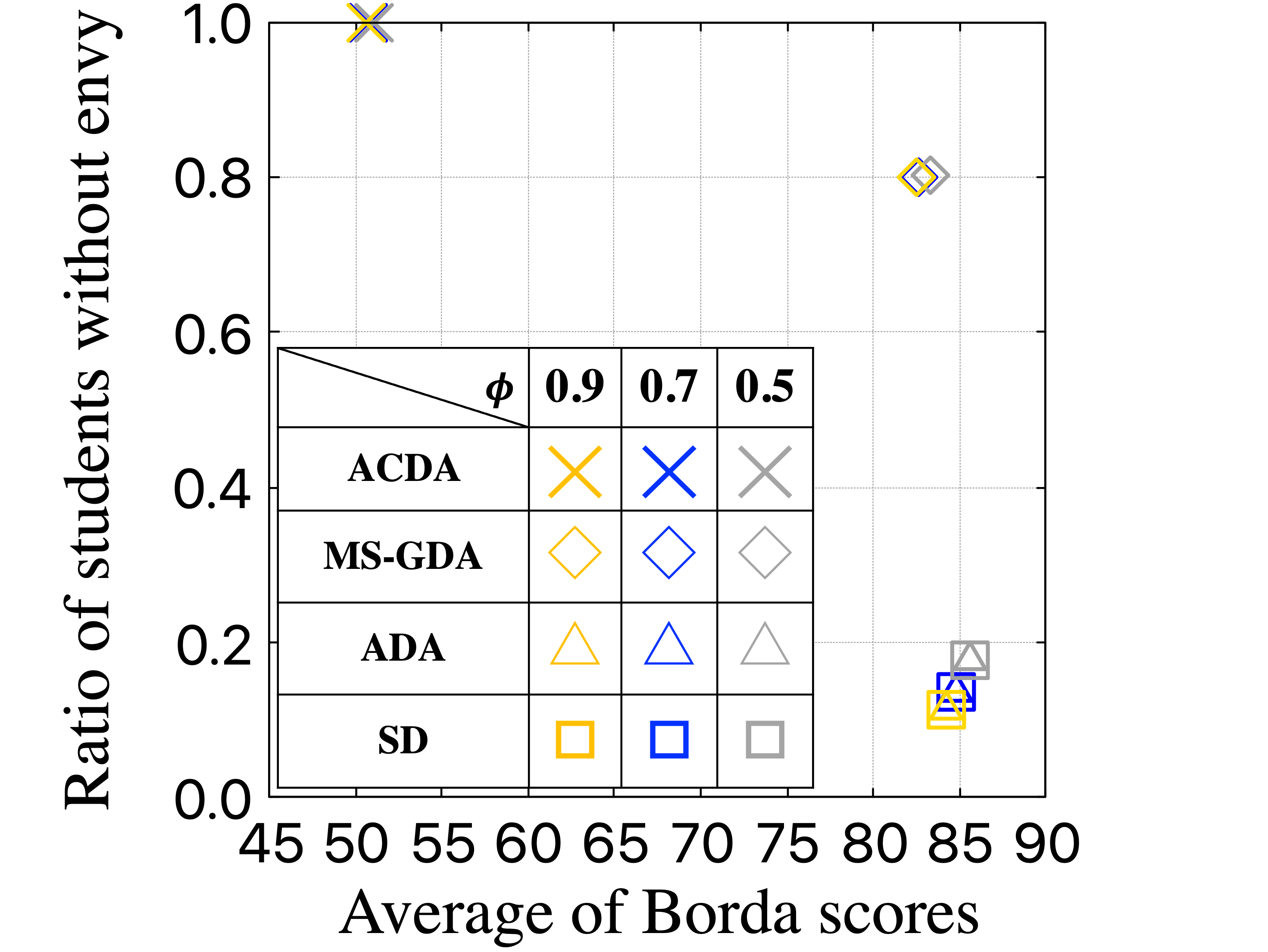}}
    \subfloat[Effect of $\phi$ ($ q = 800$)]
	{\includegraphics[scale = 0.23]{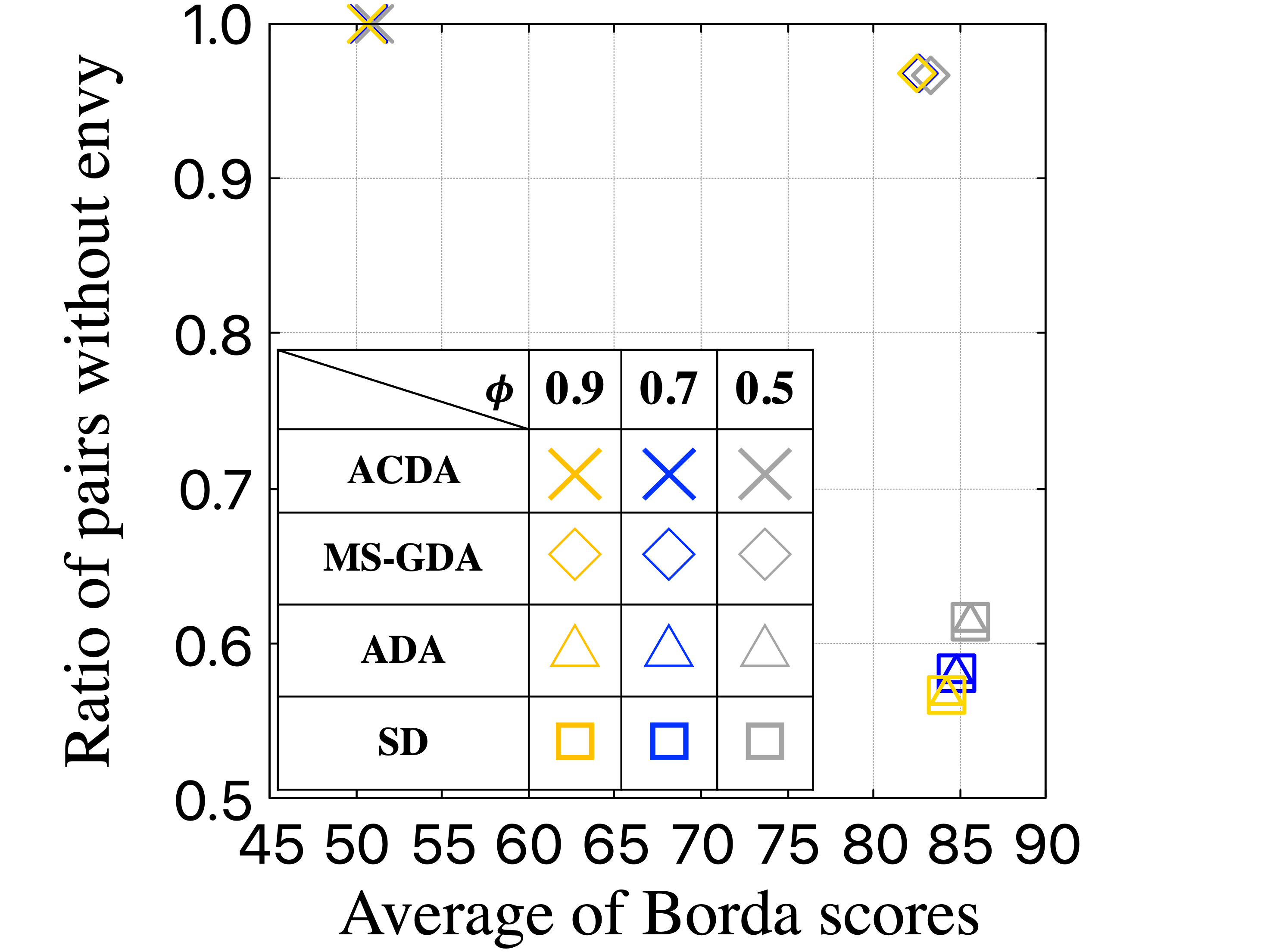}}
	\subfloat[Effect of $q$ ($\phi = 0.7$)]
	{\includegraphics[scale = 0.23]{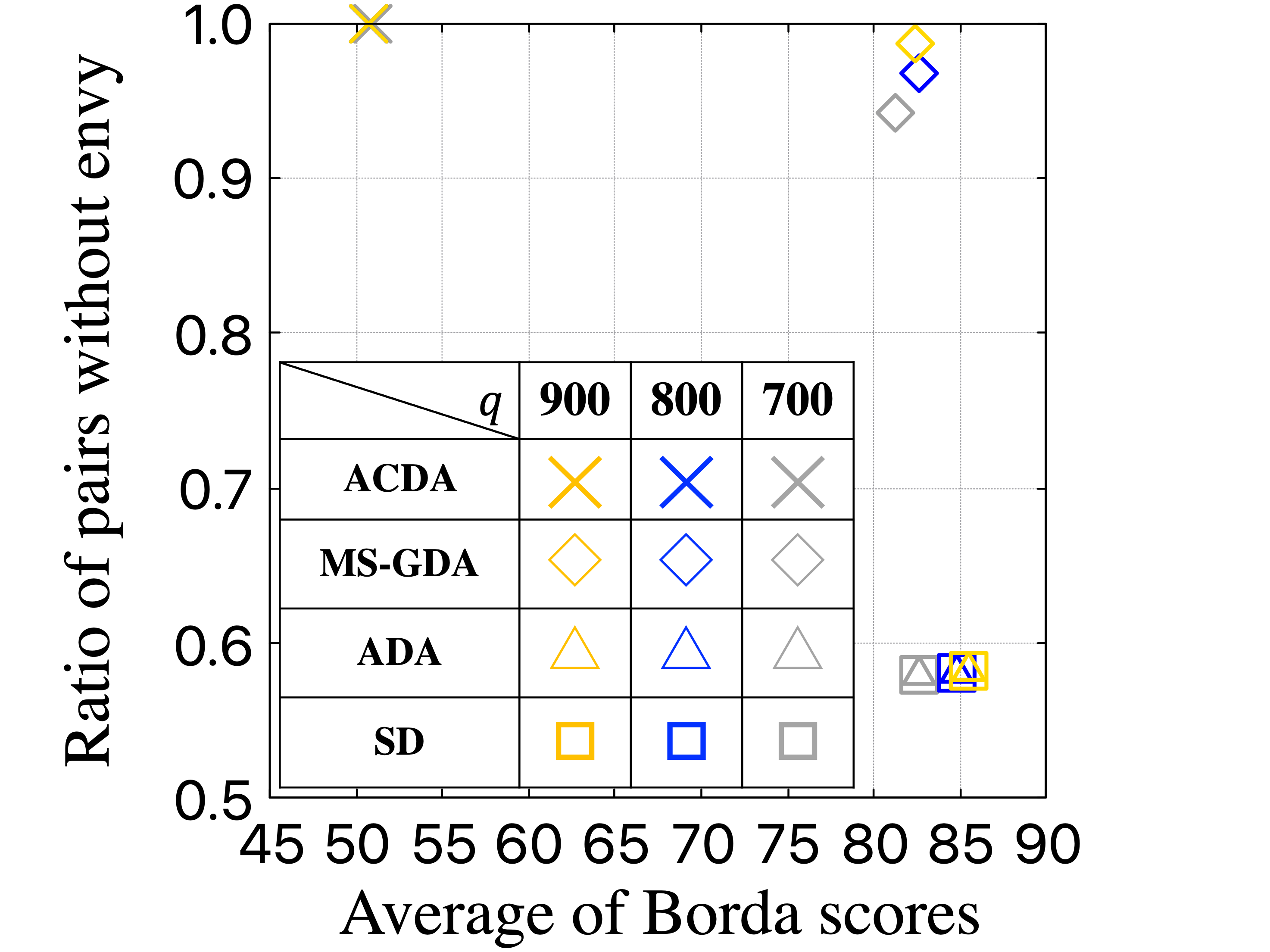}}
 \vspace{\trimbeforecaption}   
	\captionsetup{justification=centering}
	\caption{Trade-off between efficiency and fairness 
 (Market 1)}
	\label{fig:graph1}
\end{figure}
\begin{figure}[t]
	\captionsetup[subfigure]{aboveskip=5pt,belowskip=2pt}
	\centering
	\captionsetup{justification=centering}
    \subfloat[Effect of $\phi$ ($q = 450$)]
	{\includegraphics[scale = 0.23]{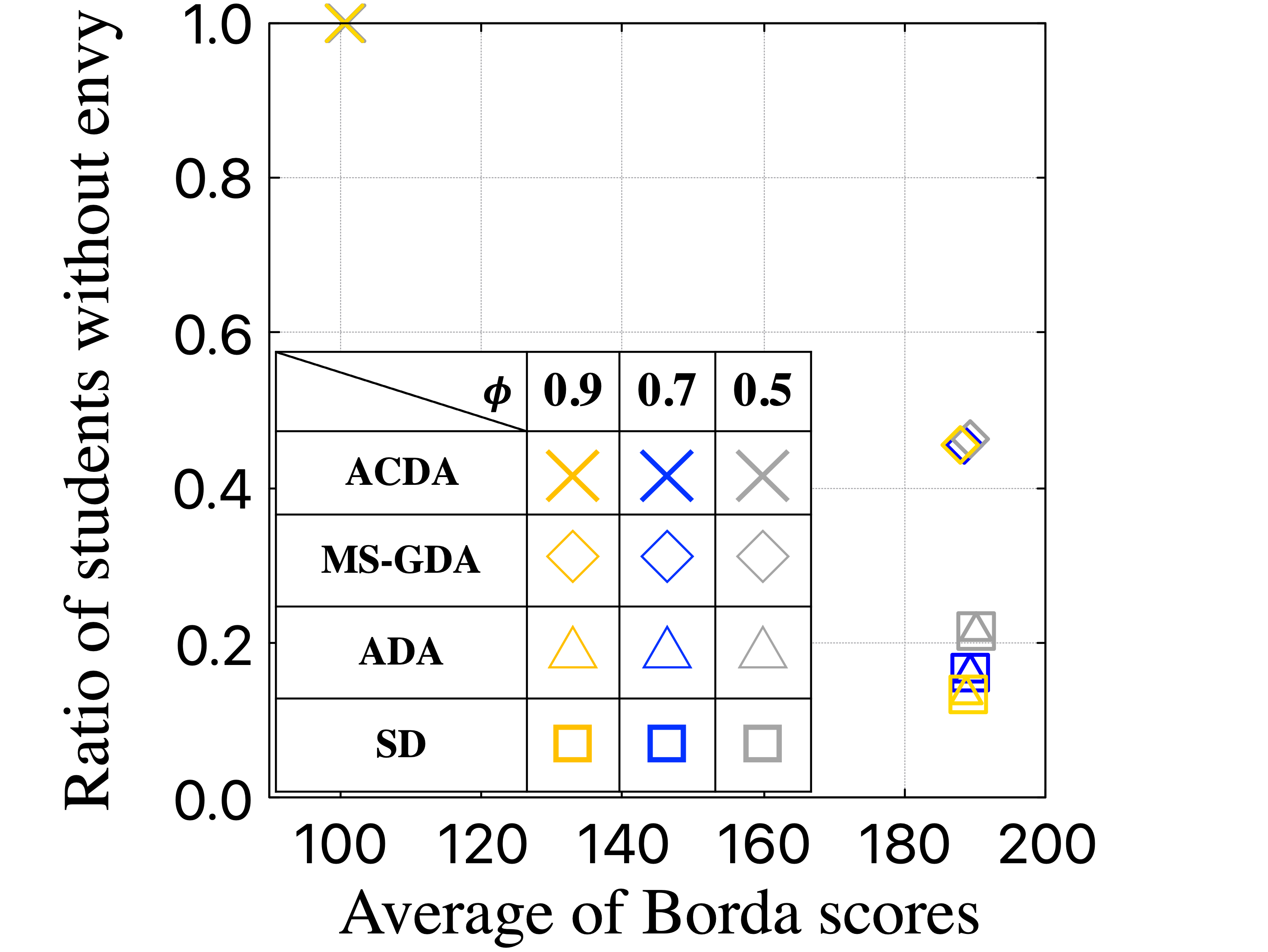}}
	\subfloat[Effect of $\phi$ ($q = 450$)]
	{\includegraphics[scale = 0.23]{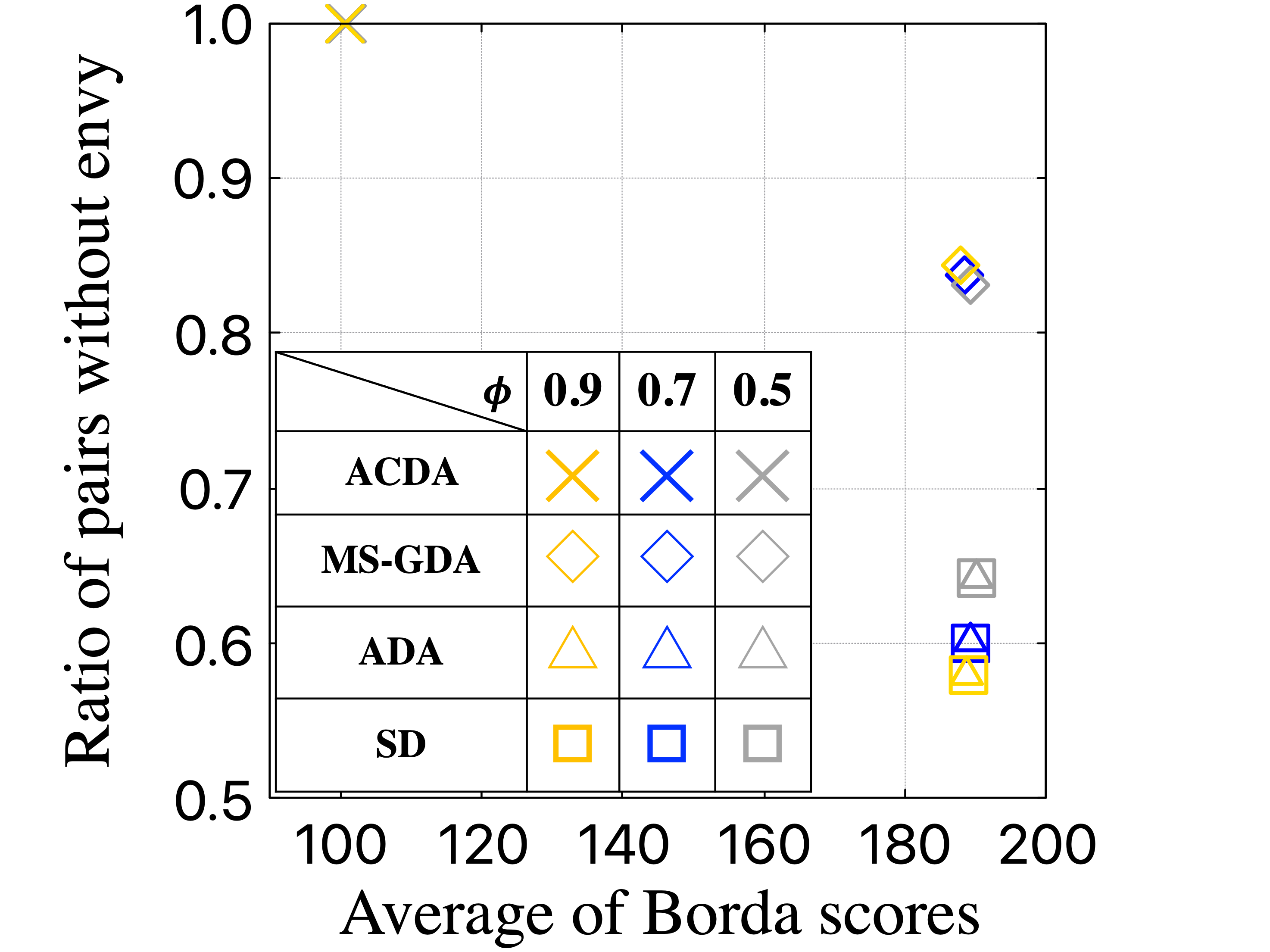}}
	\subfloat[Effect of $q$ ($\phi = 0.7$)]
	{\includegraphics[scale = 0.23]{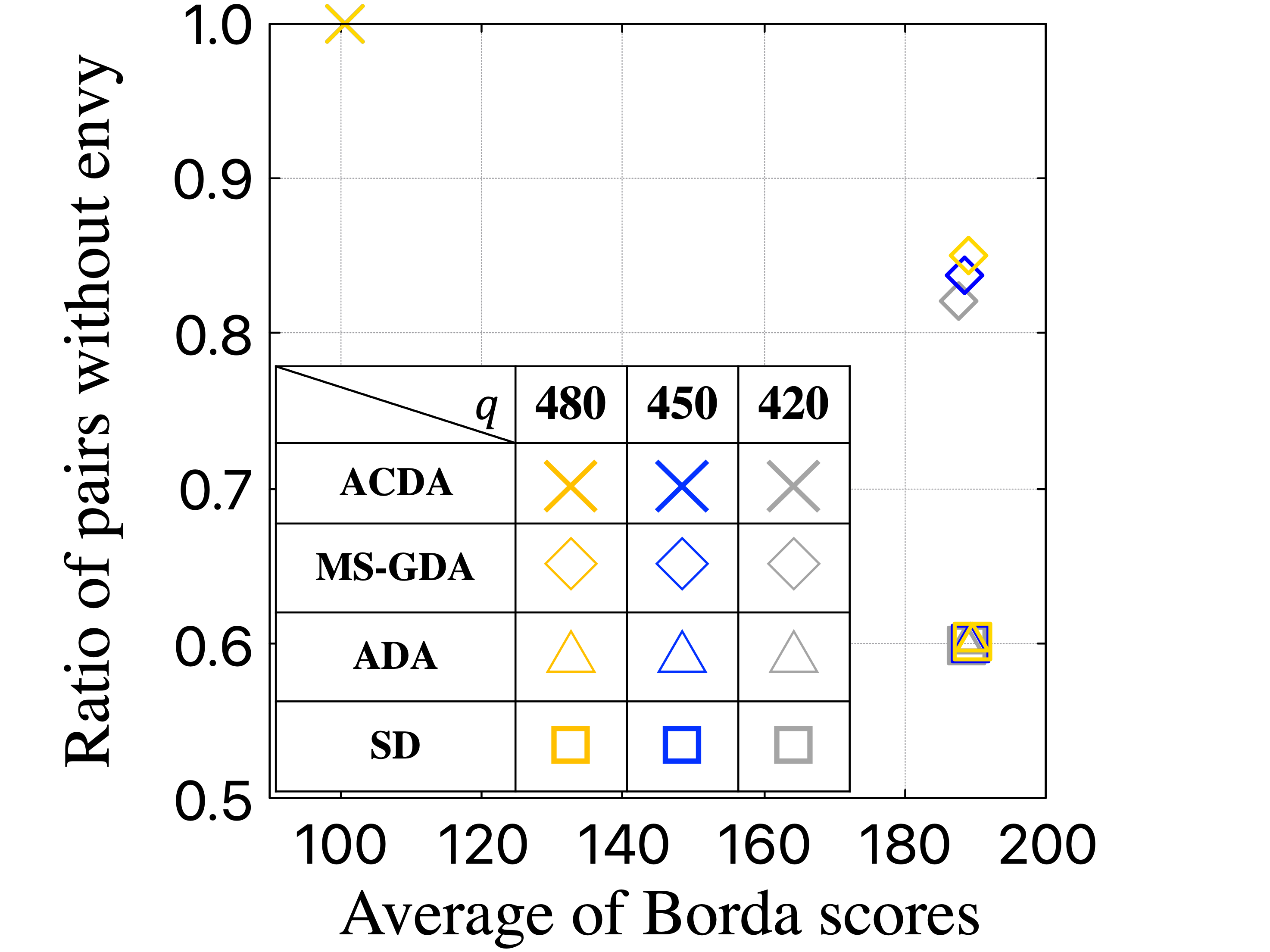}}
 \vspace{\trimbeforecaption}  
	\captionsetup{justification=centering}
	\caption{Trade-off between efficiency and fairness 
 (Market 2)}
	\label{fig:graph2}
\end{figure}

In both markets, 
student preferences are generated with the Mallows model 
\cite{drummond:ijcai:2013,lu:jmlr:2014,mallows1957non};
student preference $\succ_s$ is drawn with probability: 
$\frac{\exp (-\phi \cdot \delta(\succ_s,
	\succ_{\widehat{s}}))}{\sum_{\succ'_s} \exp(-\phi \cdot \delta(\succ'_s,
	\succ_{\widehat{s}}))}$.
Here, $\phi \in \mathbf{R}_{+}$ denotes a spread parameter, $\succ_{\widehat{s}}$ is a
central preference (uniformly randomly chosen from all possible
preferences).
Intuitively, student preferences are distributed around a central preference with spread parameter $\phi$. 
When $\phi= 0$, the Mallows model becomes identical to the uniform distribution 
and as $\phi$ increases, it quickly converges to the constant distribution that returns $\succ_{\widehat s}$. 
The preference of each college is uniformly randomly chosen.

To illustrate the trade-off between efficiency and fairness, we plot {the results} of the obtained matchings 
in a two-dimensional space, where the $x$-axis shows the average Borda scores of the students; if a student is assigned to her $i$-th choice college, her score is $m-i+1$. 
In Figures~\ref{fig:graph1}~(a) and \ref{fig:graph2}~(a),
the $y$-axis shows the ratio of students without justified envy, 
which counts the number of students who has no justified envy.
Except for these figures, 
the $y$-axis shows the ratio of the student pairs without justified envy, which counts the number of pairs of students without justified envy against each other. In both cases, the points located northeast are preferable. Each point represents the result of an average of 100 instances for one mechanism.

In Figures~\ref{fig:graph1}~(a)-(b) and 
\ref{fig:graph2}~(a)-(b), 
we vary $\phi$ in the Mallows model from $0.7$ to $0.9$. When $\phi$ becomes larger, the competition among students
becomes more fierce. 
We can see that 
MS-GDA strikes a good balance between efficiency and fairness.
Its Borda scores are very close to ADA/SD, and the ratio of students
without justified envy is about 80\% (Figure~\ref{fig:graph1}~(a))
and 46\% (Figure~\ref{fig:graph2}~(a)) for MS-GDA, 
while it is about 11\% to 17\% (Figure~\ref{fig:graph1}~(a))
and 11\% to 21\% (Figure~\ref{fig:graph2}~(a))
for ADA/SD.
Figures~\ref{fig:graph1}~(b) and \ref{fig:graph2}~(b) show the result of the same setting, but 
the $y$-axis shows the ratio of student pairs without justified envy. 
More than 95\% pairs (Figure~\ref{fig:graph1}~(b)) 
and 85\% pairs (Figure~\ref{fig:graph2}~(b)) 
have no justified envy in MS-GDA, while about 56\% to 61\% 
pairs (Figure~\ref{fig:graph1}~(b)) 
and 57\% to 64\% pairs (Figure~\ref{fig:graph2}~(b))
have no justified envy in ADA/SD. 
In Figure~\ref{fig:graph1}~(c),  we vary $q$ from 
$700$ to $900$. 
As the quota for non-rural colleges becomes 
less binding (thus, the distributional constraints become closer to an \Mnatural-convex set), MS-GDA becomes fairer. 
In Figure~\ref{fig:graph2}~(c),  we vary $q$ (the base-line 
regional quota) to 420, 450, and 480, as well as the additional flexible amount to 160, 100, and 40, respectively. 
As the additional flexible amount becomes smaller, 
the distributional constraints become closer to 
\Mnatural-convex set. Thus, MS-GDA becomes slightly fairer. 

\section{Conclusions and Future Works}
\label{sec:conclusions}
This paper developed a new mechanism called MS-GDA, which can work for any hereditary constraints and strikes a good balance
between fairness and efficiency. It uses GDA as a subroutine. 
We proved that MS-GDA is strategyproof, weakly nonwasteful, ML-fair, 
and on the Pareto frontier. 
We experimentally showed that MS-GDA is fairer than ADA, while 
it does not sacrifice the efficiency/students' welfare too much when 
distributional constraints are close to an \Mnatural-convex set. 
Our future works include 
evaluating the performance of 
MS-GDA in various real-life application domains. 

\section*{Acknowledgments}

 This work was partially supported by the JST ERATO  Grant Number JPMJER2301 and JSPS
 KAKENHI Grant Numbers JP20H00609, JP21H04979, Japan.
 
\bibliographystyle{abbrvnat}
\bibliography{matching}
\end{document}